\DeclareMathOperator{\argmin}{\mbox{argmin}}
\def\bn{\hfill \\ \smallskip\noindent}
\def\argmin{\mathop{\rm argmin}}
\def\vx{x}
\def\dom{\mbox{dom\,}}
\def\prox{\mbox{prox}}
\newcommand{\beq}{\begin{equation}}
\newcommand{\eeq}{\end{equation}}
\newcommand{\st}{{\rm s.t.}}
\newcommand{\cC}{{\mbox{ $\mathcal{C}$}}}
\begin{document}
\def\pn {\par\smallskip\noindent}
\def \bn {\hfill \\ \smallskip\noindent}
\newcommand{\fs}{f_1,\ldots,f_s}
\newcommand{\f}{\vec{f}}
\newcommand{\hx}{\hat{x}}
\newcommand{\hy}{\hat{y}}
\newcommand{\barhx}{\bar{\hat{x}}}
\newcommand{\vecx}{x_1,\ldots,x_m}
\newcommand{\xoy}{x\rightarrow y}
\newcommand{\barx}{{\bar x}}
\newcommand{\bary}{{\bar y}}
\newtheorem{theorem}{Theorem}[section]
\newtheorem{lemma}{Lemma}[section]
\newtheorem{corollary}{Corollary}[section]
\newtheorem{proposition}{Proposition}[section]
\newtheorem{definition}{Definition}[section]
\newtheorem{claim}{Claim}[section]
\newtheorem{remark}{Remark}[section]

\def\br{\break}
\def\smskip{\par\vskip 5 pt}
\def\proof{\bn {\bf Proof.} }
\def\QED{\hfill{\bf Q.E.D.}\smskip}
\def\qed{\quad{\bf q.e.d.}\smskip}

\newcommand{\cM}{\mathcal{M}}
\newcommand{\cJ}{\mathcal{J}}
\newcommand{\cT}{\mathcal{T}}
\newcommand{\bx}{\mathbf{x}}
\newcommand{\bp}{\mathbf{p}}
\newcommand{\bz}{\mathbf{z}}

\title{A Distributed, Asynchronous and Incremental Algorithm for Nonconvex Optimization: An ADMM Based Approach}
\author{{Mingyi Hong}\thanks{M. Hong is with the Department of Industrial and Manufacturing Systems Engineering (IMSE), Iowa State University, Ames, IA 50011, USA. Email: \texttt{mingyi@iastate.edu}}}
\maketitle
\begin{abstract}
The alternating direction method of multipliers (ADMM) has been popular for solving many signal processing problems, convex or nonconvex. In this paper, we study an asynchronous implementation of the ADMM for solving a nonconvex nonsmooth optimization problem, whose objective is the sum of a number of component functions. The proposed algorithm allows the problem to be solved in a distributed, asynchronous and incremental manner. First, the component functions can be distributed to different computing nodes, who perform the updates asychronously without coordinating with each other. Two sources of asynchrony are covered by our algorithm: one is caused by the heterogeneity of the computational nodes, and the other arises from  unreliable communication links.
Second, the algorithm can be viewed as implementing an incremental algorithm where at each step the (possibly delayed) gradients of only a subset of component functions are updated. We show that when certain bounds are put on the level of asynchrony, the proposed algorithm converges to the set of stationary solutions (resp. optimal solutions) for the nonconvex (resp. convex) problem. To the best of our knowledge, the proposed ADMM implementation can tolerate the highest degree of asynchrony, among all known asynchronous variants of the ADMM. Moreover, it is the first ADMM implementation that can deal with nonconvexity and asynchrony at the same time.
\end{abstract}

\section{Introduction}
Consider the following nonconvex and nonsmooth problem
\begin{align}\label{eq:consensus}
\begin{split}
\min&\quad f(x):=\sum_{k=1}^{K}g_k(x)+h(x)\\
\st&\quad x\in X
\end{split}
\end{align}
where $g_k$'s are a set of smooth, possibly nonconvex functions; $h(x)$ is a convex nonsmooth regularization term. In this paper we consider the scenario where the component functions $g_k$'s are located at different distributed computing nodes. We seek an algorithm that is capable of computing high quality solutions for problem \eqref{eq:consensus} in a distributed, asynchronous and incremental manner.


Dealing with asynchrony is a central theme in designing distributed algorithms. Indeed, often in a completely decentralized setting, there is no clock synchronization, little coordination among the distributed nodes, and minimum mechanism to ensure reliable communication. Therefore an ideal distributed algorithm should be robust enough to handle different sources of asynchrony, while still producing high quality solutions in a reasonable amount of time. Since the seminal work of Bertsekas and Tsitsiklis \cite{Bertsekas_Book_Distr, tsitsiklis86}, there has been a large body of literature focusing on asynchronous implementation of various distributed schemes; see, e.g., \cite{zhu10, nedic01, scutari08c, scutari07asyn, liao15icassp} for the developments by the optimization and signal processing communities. In \cite{nedic01}, an incremental and asynchronous gradient-based algorithm is proposed to solve a convex problem, where at each step certain outdated gradients can be used for update. In \cite{scutari07asyn, scutari08c}, the authors show that the well-known iterative water-filling algorithm \cite{yu02a,scutari08a} can be implemented in a totally asynchronous manner, as long as the interference among the users are weak enough.

The recent interest in optimization and machine learning for problems with massive amounts of data introduces yet another compelling reason for dealing with asynchrony; see \cite[Chapter 10]{mass_data13}. When large amounts of data are distributedly located at computing nodes, local computations can be costly and time consuming. If synchronous algorithms are used, then the slowest nodes can drag the performance of the entire system. To make distributed learning algorithms scalable and efficient, the machine learning community has also started to deal with asynchrony; see recent results in \cite{Ho13, Li13Distributed, niu11, liu14asynchronous, Agarwal11, langford09}. For example in \cite{liu14asynchronous}, an asynchronous randomized block coordinate descent method is developed for solving convex block structured problem, where the per-block update can utilize delayed gradient information. In \cite{Agarwal11}, the authors show that it is also possible to tolerate asynchrony in stochastic optimization. Further, they prove that the rate of the convergence is more or less independent of the maximum allowable delay, which is an improvement over earlier results in \cite{nedic01}.

In this paper, we show that through the lens of the ADMM method, the nonconvex and nonsmooth problem \eqref{eq:consensus} can be optimized in an asynchronous, distributed, and incremental manner. The ADMM, originally developed in early 1970s \cite{ADMMGlowinskiMorroco,ADMMGabbayMercier}, has been extensively studied in the last two decades \cite{Eckstein89, EcksteinBertsekas1992, Glow84, bertsekas97, BoydADMM, ADMMlinearYin, HeYuan2012, Monteiro13, HongLuo2012ADMM}. It is known to be effective in solving large-scale linearly constrained convex optimization problems. Its application includes machine learning, computer vision, signal and image processing, networking, etc; see  \cite{Yin:2008:BIA:1658318.1658320, Yang09TV, Scheinberg10inverse, Schizas08, Feng14, liao14sdn}. However, despite various successful numerical attempts (see, e.g., \cite{zhang10ADMM_NMF, sun14,wen12,zhang14,Forero11,Ames13LDA,jiang13ADMM,Liavas14,Shen:2014}), little is known about whether ADMM is capable of handling nonconvex optimization problems, or whether it can be used in an asynchronous setting.
There are a few recent results that start to fill these gaps. Reference \cite{hong14nonconvex_admm} shows that the ADMM converges when applied to certain nonconvex consensus and sharing problems, provided that the stepsize is chosen large enough. However it is not clear whether asynchrony will destroy the convergence. Reference \cite{zhang14} proposes an asynchronous implementation for convex global consensus problem, where the distributed worker nodes can use outdated information for updates. Two conditions are imposed on the protocol, namely the {\it partial barrier} and {\it bounded delay}. The algorithm cannot deal with the  asynchrony cause by loss/delay in the communication link, nor does it cover nonconvex problems. In \cite{Wei13, chang14} randomized versions of ADMM are proposed for consensus problems, where the nodes are allowed to be randomly activated for updates. We note that the algorithms in \cite{Wei13, chang14}  still require the nodes to use up-to-date information whenever they update, therefore they are more in line with randomized algorithms than asynchronous algorithms. Further, it is not known whether the analysis carries over to the case when the problem is nonconvex.

The algorithm proposed in this work is a generalization of the flexible proximal ADMM algorithm proposed in \cite[Section 2.3]{hong14nonconvex_admm}. The key feature of the proposed algorithm is that it can deal with asynchrony arises from the heterogeneity of the computing nodes as well as the loss/delay caused by unreliable communication links. The basic requirement here is that  the combined effects of these sources leads to a  bounded delay on the component gradient evaluation, and that the stepsize of the algorithm is chosen appropriately. Further, we show that the framework studied here can be viewed as an (possibly asynchronous) incremental scheme for nonconvex problem, where at each iteration only a subset of (possibly delayed) component gradients are updated. To the best of our knowledge, asynchronous incremental schemes of this kind hasn't been studied in the literature; see \cite{Sra12,Razaviyayn13Stochastic,Mairal14incremental} for recent works on {\it synchronous} incremental algorithm for nonconvex problems.

\section{The ADMM-based Framework}\label{sec:algorithm}
\subsection{Preliminary}
Consider the optimization problem \eqref{eq:consensus}. In many practical applications, $g_k$'s need to be handled by a single distributed node, such as a thread or a processor, which motivates the so-called global consensus formulation \cite[Section 7]{BoydADMMsurvey2011}. Suppose there is a master node and $K$ distributed nodes available. Let us introduce a set of new variables $\{x_k\}_{k=1}^{K}$, and transform problem \eqref{eq:consensus} to the following linearly constrained problem
\begin{align}\label{eq:consensus:admm}
\begin{split}
\min&\quad \sum_{k=1}^{K}g_k(x_k)+h(x)\\
\st&\quad x_k=x,\;\forall~k=1,\cdots,K, \quad x\in X.
\end{split}
\end{align}
The augmented Lagrangian function is given by
\begin{align}\label{eq:lagrangian:consensus}
\begin{split}
L(\{x_k\}, x; y)&=\sum_{k=1}^{K}g_k(x_k)+h(x)+\sum_{k=1}^{K}\langle y_k, x_k-x\rangle\\
&\quad+\sum_{k=1}^{K}\frac{\rho_k}{2}\|x_k-x\|^2,
\end{split}
\end{align}
where $\rho_k>0$ is some constant, and $y:=\{y_1,\cdots, y_K\}$.  Applying the vanilla ADMM algorithm, listed below in \eqref{eq:admm:vanilla},  one obtains a distributed solution where each function $g_k$ is only handled by a single node $k$ at any iteration $t=0,1, 2, \ldots$:
\begin{align}\label{eq:admm:vanilla}
\begin{split}
x^{t+1}&=\arg\min_{x\in X}\; L(\{x^t_k\}, x; y^t) \\
&=\arg\min_{x\in X}\; h(x)+\sum_{k=1}^{K}\langle y^t_k, x^t_k-x\rangle +\sum_{k=1}^{K}\frac{\rho_k}{2}\|x^t_k-x\|^2\\
x_k^{t+1}&=\arg\min_{x_k}\; L(\{x_k\}, x^{t+1}; y^t),\\
 &= \arg\min _{x_k} \; g_k(x_k) + \langle y^t_k, x_k-x^{t+1}\rangle+\frac{\rho_k}{2}\|x_k-x^{t+1}\|^2,\; \forall~k \\
y^{t+1}_k&= y^t_k + \rho_k (x^{t+1}_k-x^{t+1}), \; \forall ~k.
\end{split}
\end{align}
Under suitable conditions the algorithm converges to the set of stationary solutions of \eqref{eq:consensus}; see \cite{hong14nonconvex_admm}.

At this point, it is important to note that the algorithm described in \eqref{eq:admm:vanilla} uses a synchronous protocol, that is
 \begin{itemize}
 \item [a)] The set of agents that are selected to update at each iteration act in a coordinated way;
 \item [b)] There is no communication delay and/or loss between the agents and the master node;
 \item [c)] All local updates are performed assuming that the most up-to-date information is available.
 \end{itemize}
However, in many practical large-scale networks, these assumptions are hardly true. Nodes may have different computational capacity, or they may be assigned jobs that have different computational requirements. Therefore the time consumed to complete local computation can vary significantly among the nodes. This makes them difficult to coordinate with each other in terms of when to update, which information to use for the update and so on. Further, the communication links between the distributed and the master nodes can have delays or may even be lossy.

Additionally, we want to mention that in certain machine learning and signal processing problems when there is a large number of component functions, it is desirable that the algorithm is {\it incremental}, meaning at each iteration only a {\it subset} of $g_k's$ are used for update; see \cite{Schmidt13,Defazio14,Sra12,Razaviyayn13Stochastic,Mairal14incremental}. Clearly the vanilla ADMM described in \eqref{eq:admm:vanilla} does not belong to this type of algorithm.

\subsection{The Proposed Algorithm}
There are two key features that we want to build into the ADMM-based algorithm. One is to allow the nodes to use staled information for local computation, as long as such information is not ``too old" (this notion will be made precise shortly). This enables the nodes to have varying update frequency, therefore faster nodes do not need to wait for the slower ones. The other feature is to take into account scenarios where the communication links among the node are lossy or have delays. Below we give a high level description of the proposed scheme.

Suppose there is a master node and $K$ distributed nodes in the system. Let the index $t=1,\cdots$ denote the total number of updates that have been performed on the variable $x$. The master node takes care of updating all the primal and dual variables, while the distributed nodes compute the gradients for each component function $g_k$. At each iteration $t+1$, the master node first updates $x$. Then it waits a fixed period of time, collects a few (possibly staled) gradients of component functions returned by a subset of local nodes $\cC^{t+1}\subseteq\{1,\cdots, K\}$, then proceed to the next round of update.  On the other hand, each node $k$ is in charge of a local component function $g_k$. Based on the copy of $x$ passed along by the master node, node $k$ computes and returns the gradient of $g_k$ to the master node. Note that for data intensive applications, the computation of the gradient can be time consuming. Also there can be delays of communication between two different nodes in the network. Therefore there is no guarantee that during the period of computation and communication of the gradient of $g_k$, the $x$ variable at the master node will always remain the same.

To characterize the possible delay involved in the computation and communication,  we define a new sequence $\{t(k)\}$, where each $t(k)$ represents the index of the copy of $x$ that evaluates the $\nabla g_k$ used by the master node at iteration $t$.

The proposed algorithm, named Asynchronous Proximal ADMM (Async-PADMM), is given in the following table.
\begin{center}
\fbox{
\begin{minipage}{3.4 in}
\smallskip
\centerline{\bf Algorithm 1. The Async-PADMM for Problem \eqref{eq:consensus:admm}}
\smallskip\small
S1) At each iteration $t+1$, compute:
\begin{align}\label{eq:x_update_prox}
\begin{split}
x^{t+1}&={\rm arg}\!\min_{x\in X}\; L(\{x_{k}^{t}\}, x;  y^{t})\\
&=\prox_{\iota(X)+h}\left[\frac{\sum_{k=1}^{K}\rho_k x_{k}^{t}+\sum_{k=1}^{K}y_{k}^{t}}{\sum_{k=1}^{K}\rho_k}\right].
\end{split}
\end{align}
S2) Pick a set $\cC^{t+1}\subseteq\{1,\cdots, K\}$, for all $k\in\cC^{t+1}$, update index $[t+1](k)$; for all $k\notin\cC^{t+1}$, let $[t+1](k) = [t](k)$
S3) Update $x_k$ by solving:
\begin{align}\label{eq:x_k_update_prox}
x^{t+1}_k&=\arg\!\min_{x_k} \; \langle\nabla g_k(x^{[t+1](k)}), x_k-x^{t+1}\rangle+\langle y^{t}_k, x_k-x^{t+1}\rangle\nonumber\\
&+\frac{\rho_k}{2}\|x_k-x^{t+1}\|^2, \; \forall~k=1,\cdots, K.
\end{align}
S4) Update the dual variable:
\begin{align}\label{eq:y_update_prox}
y^{t+1}_k=y_k^{t}+\rho_k\left(x^{t+1}_k-x^{t+1}\right), \; \forall~k=1,\cdots, K.
\end{align}
\end{minipage}
}
\end{center}
In Algorithm 1, we have used the proximity operator, which is defined below. Let $h:\dom(h)\mapsto \mathbb{R}$ be a (possibly nonsmooth) convex function. For every $\vx\in\dom(h)$, the \emph{proximity operator}
of $h$ is defined as \cite[Section 31]{Rockafellar70}
\begin{align}\label{eq:prox}
\prox_{h}(\vx)={\argmin_{u}}\;\;h(u)+\frac12\|\vx-u\|^2.
\end{align}

We note that in Step S2, $\cC^{t+1}$ defines the subset of component functions whose gradients have arrived during iteration $t+1$; again $[t+1](k)$ is the index of the copy of $x$ that evaluates the $\nabla g_k$ used by the master node at iteration $t+1$. For those component functions without new gradient information available, the old gradients will continue to be used (indeed, note that we have for all $k\notin\cC^{t+1}$, $[t+1](k) = [t](k)$). In Step S3, all the variables $\{x_k\}$, regardless $k\in\cC^{t+1}$ or not, are updated according to the following gradient-type scheme:
\begin{align}
x^{t+1}_k=x^{t+1}-\frac{1}{\rho_k}\left(\nabla g_k(x^{[t+1](k)})+y^t_k\right).\label{eq:gradient}
\end{align}
Despite the fact that the gradients of all the component functions are used at each step $t+1$, only a subset of them (i.e., thosed indexed by $\cC^{t+1}$) differ from those at the previous iteration. Therefore the algorithm can be classified as {\it incremental} algorithm; see \cite{Schmidt13,Defazio14} for related incremental algorithms for convex problems.

To highlight the asynchronous aspect of the algorithm, below we present an equivalent version of Algorithm 1, from the perspective of the distributed nodes and the master node, respectively. We use $r_k$, $k=1,\cdots, K$ to denote the clock at node $k$, and use $r_0$ to denote the clock at the master node.

\begin{center}
\fbox{
\begin{minipage}{3.4 in}
\smallskip
\centerline{\bf Algorithm 1(a). Async-PADMM at the Master Node}
\smallskip\small
S0) {\bf Set} $r_0=1$, initialize $\{x^1_k, y^1_k\}$, $x^1$.\\
S1) {\bf Update $x$}:
\begin{align}\label{eq:x_update_prox}
x^{r_0+1}&={\rm arg}\!\min_{x\in X}\; L(\{x_{k}^{r_0}\}, x;  y^{r_0})\nonumber\\
&=\prox_{\iota(X)+h}\left[\frac{\sum_{k=1}^{K}\rho_k x_{k}^{r_0}+\sum_{k=1}^{K}y_{k}^{r_0}}{\sum_{k=1}^{K}\rho_k}\right].
\end{align}
S2) {\bf Broadcast} $x^{r_0+1}$ to all agents. \\
S3) {\bf Wait} for a fixed period of time. \\
S4) {\bf Collect} a set $\cC^{r_0+1}\subseteq\{1,\cdots, K\}$ of new local gradients, denoted as $\{z^{r_0+1}_k\}_{k\in\cC^{r_0+1}}$, arrived during S3). If multiple gradients arrive from the same node, pick the one with the smallest local time stamp.\\
S5) {\bf Let} $\nabla g^{r_0+1}_k = z^{r_0+1}_k$, $\forall~k\in\cC^{r_0+1}$.\\
S6) {\bf Let} $\nabla g^{r_0+1}_k = \nabla g^{r_0}_k$, $\forall~k\notin\cC^{r_0+1}$.\\
S7) {\bf Compute}
\begin{align}
\begin{split}
x_k^{r_0+1}&=x^{r_0+1}-\frac{1}{\rho_k}\left(\nabla g_k^{r_0+1}+y^{r_0}_k\right), \; \forall~k\\
y^{r_0+1}_k&=y_k^{r_0}+\rho_k\left(x^{r_0+1}_k-x^{r_0+1}\right), \; \forall~k.
\end{split}
\end{align}
S8) {\bf Set} $r_0=r_0+1$, go to step S1).
\end{minipage}}
\end{center}

\begin{center}
\fbox{
\begin{minipage}{3.4 in}
\smallskip
\centerline{\bf Algorithm 1(b). The Async- PADMM at Node $k$}
\smallskip\small
S0) {\bf Set} $r_k=1$.\\
S1) {\bf Wait} until a new $x$ is arrived, mark it as $x^{r_k}$.\\
S2) {\bf Compute the gradient} $\nabla g_k(x^{r_k})$.\\
S3) {\bf Send} $\nabla g_k(x^{r_k})$ and the local time stamp $r_k$ to the master node.\\
S4) {\bf Set} $r_k=r_k+1$, go to step S1).
\end{minipage}
}
\end{center}

It is not hard to see that the scheme described here is equivalent to Algorithm 1, except that in Algorithm 1 every step is measured using the clock at the master node. We have the following remarks regarding to the above algorithm descriptions.

\begin{remark}
{\it (Blocking Events)} There is a minimal number of blocking events for both the master node and the distributed agents. In Algorithm 1(a), the master node only needs to wait for a given period of time in step S3). After the waiting period, it collects the set of new gradients that has arrived during that period. Note that $\cC^{r_0+1}$ is allowed to be an empty set, meaning the master node is {\it not} blocking on the arrival of any local gradients. Similarly, each node $k$ does not need to wait for the rest of the agents to perform computation: once it obtains a new copy of $x^{r_k+1}$ the computation starts immediately. As soon as the computation is done node $k$ can send out the new gradient, without checking whether that gradient has arrived at the master node. Admittedly, in Step S1 of Algorithm 1(b), node $k$ needs to wait for a new $x$, but this is reasonable because otherwise there is nothing it can do.
\end{remark}

\begin{remark}
{\it (Characterization on the Delays)}
{The proposed algorithm allows communication delays and packet loss between the master and the distributed nodes. For example, the vector $x^{t+1}$ broadcasted by the master node may arrive at the different distributed nodes at different time instances; it may even arrive at a given node out of order, i.e., $x^{t+1}$ arrives before $x^{t}$. Further, $x^{t+1}$ may get lost during the transmission and never reaches a given node. All these scenarios can happen in the reverse communication direction as well. Comparing Algorithm 1 and Algorithm 1(a)--(b), we see that if $k\in\cC^{t+1}$, then the difference $(t+1)-[t+1](k)$ is the total computation time and the round-trip communication delay, starting from broadcasting $x^{[t+1](k)}$ until the updated $\nabla g_k(x^{[t+1](k)})$ is received by the master node. If $k\notin\cC^{t+1}$, then the difference $(t+1)-[t+1](k)$ is the number of times that the gradient $\nabla g_k (x^{[t+1](k)})$ has been used so far (or equivalently the number of iterations since the last gradient from node $k$ has arrived). Clearly, when there is no delay at all , then the system is synchronous and we have $[t+1](k)=t+1$.} In Fig. \ref{fig:time}, we illustrate the relationship $t$ and $t(k)$, and different types of asynchronous events covered by the algorithm.
\begin{figure*}
\center
\includegraphics[width=0.6\linewidth]{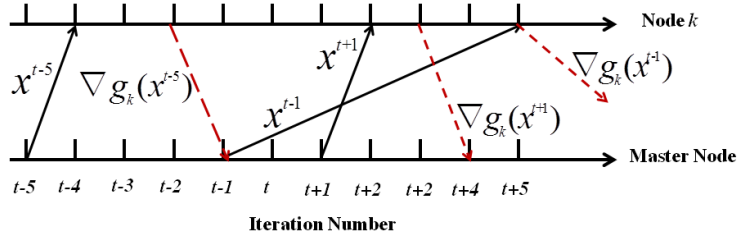}
\caption{Illustration of the asynchronous update rules. In this figure, $[t-1](k)=t-5$, while $[t+4](k)=t+1$. Note that $x^{t-1}$ arrives at node $k$ after $x^{t+1}$ does. Also note that the copies of $x$ broadcasted by the master node from $t-4$ to $t-2$ are not received by node $k$, either because node $k$ is busy computing the gradient, or because the copies get lost during communication. Finally, it may happen that some gradient, say $\nabla g_k (x^{t-1})$, never arrives at the master node.}\label{fig:time}
\end{figure*}
\end{remark}

\begin{remark}
{\it (Connection to Existing Algorithms)}
To the best of our knowledge, the proposed algorithm can tolerate the highest degree of asynchrony, among all known asynchronous variants of ADMM. For example, the scheme proposed in \cite{zhang14} corresponds to the case where there is no communication delay or loss (all messages sent are received instantaneously by the intended receiver). It is not clear whether the scheme in \cite{zhang14} can be generalized to our case\footnote{In fact, no proof is provided in \cite{zhang14}. Therefore it becomes difficult to see whether it is possible to extend their analysis.}. The schemes proposed in \cite{Wei13} and  \cite{chang14}  require the nodes to use the most up-to-date information, hence hardly asynchronous. The second major difference with the existing literature is about the tasks performed by the distributed nodes: in \cite{chang14,Wei13,zhang14} each node directly optimizes the augmented Lagrangian, while here each node  computes the gradient of their respective component functions. The third difference is on the assumptions made on problem \eqref{eq:consensus}: the schemes in \cite{chang14,Wei13,zhang14} handle convex problem but {\it each} component function $g_i$ can be nonsmooth, while we can handle nonconvex functions, but there can be only a single nonsmooth function $h$ (see Assumption A1 below). The fourth difference is on the assumed network topology:  the schemes in \cite{chang14,Wei13} deal with general topology, where nodes are interconnected according to certain graphs; our work and \cite{zhang14} are restricted to the ``star" network topology where all distributed nodes communicate directly with the master node.
\end{remark}

\begin{remark}
{\it (Incrementalism)} Algorithm 1 can be viewed as an {\it incremental} algorithm, as long as each $|\cC^{t+1}|$ is a strict subset of $\{1,\cdots K\}$, in which case the gradients of only a {\it subset} of component functions are updated.  This is in the same spirit of several recent incremental algorithms for convex problems \cite{Schmidt13,Defazio14}, despite the fact that our algorithm has a different form, and we can further handle nonconvexity and asychrony.

It is worth noting that Algorithm 1 can be modified to resemble the more traditional incremental algorithm {\cite{bertsekas2000incremental}}, where each iteration only those variables with ``fresh" gradients are updated. That is,  steps S3 and S4 are replaced with the following steps:
\begin{center}
\fbox{
\begin{minipage}{3.4 in}
\smallskip\small
S3)' Update $x_k$ by solving:
\begin{align}
x^{t+1}_k&=\arg\!\min_{x_k} \; \langle\nabla g_k(x^{[t+1](k)}), x_k-x^{t+1}\rangle+\langle y^{t}_k, x_k-x^{t+1}\rangle\nonumber\\
&+\frac{\rho_k}{2}\|x_k-x^{t+1}\|^2, \; \forall~k\in\cC^{t+1}.\nonumber
\end{align}
S4)' Update the dual variable:
\begin{align}
y^{t+1}_k=y_k^{t}+\rho_k\left(x^{t+1}_k-x^{t+1}\right), \; \forall~k\in\cC^{t+1}.\nonumber
\end{align}
\end{minipage}
}
\end{center}

However, we found that this variant leads to much more complicated analysis \footnote{To analyze this version, we need to define a few additional sequences, one for each node $k$, to characterize the iteration indices in which each component variable $x_k$ is updated. We will also need to impose that the $x_k$'s are updated often enough; see \cite[Chapter 7]{Bertsekas_Book_Distr}.}, stringent requirement on the range of stepsizes $\rho_k$'s, and most importantly, slow convergence. Therefore we choose not to discuss the related variants in the paper. We also note that recent works in incremental-type algorithms for solving \eqref{eq:consensus} either do not deal with nonconvex problem \cite{Schmidt13,Defazio14}, or they do not consider asynchrony \cite{Sra12,Razaviyayn13Stochastic,Mairal14incremental}.
\end{remark}

%

\section{Convergence Analysis}\label{sec:convergence}
In order to reduce the notational burden, our analysis will be based on Algorithm 1, which uses a global clock.  We first make a few assumptions.

\pn {\bf Assumption A.}
\begin{itemize}
\item[A1.] (On the Problem) There exists a positive constant $L_k>0$ such that $$\|\nabla_k g_k(x_k)-\nabla_k g_k(z_k)\|\le L_k \|x_k-z_k\|, \; \forall~x_k,z_k, \; \forall~k.$$
Moreover, $h$ is convex (possibly nonsmooth); $X$ is a closed, convex and compact set. $f(x)$ is bounded from below over $X$.
\item[A2.] (On the Asynchrony) The total delays are bounded, i.e., for each node $k$ there exists finite constants $T_k$ such that $t-t(k)\le T_k$ for all $t$ and $k$.
\item[A3.] (On the Algorithm)
For all $k$, the stepsize $\rho_k$ is chosen large enough such that:
\begin{align}
\alpha_k&:={\rho_k}-2\left(\frac{1}{\rho_k}+\frac{7L_k}{2\rho_k^2} \right) L^2_k (T_k+1)^2-L_k T^2_k>0\label{eq:alpha}\\
\rho_k& > 7 L_k, \; k=1,\cdots, K.
\end{align}
\end{itemize}
By Assumption A2, we see that the only requirement on the asynchrony is that when each $x_k$ is updated, the information used to compute the gradient should be one of $x$ generated within last $T_k$ iterations. So it is perfectly legitimate if copies of $x$ or copies of the gradients get lost due to unsuccessful communication. Also there is nothing preventing copies of $x$ from arriving at the same node $k$ with reversed order (e.g., $x^{t}$ arrives after $x^{t+1}$). Due to this assumption on the boundedness of the asynchrony, Algorithms 1 belongs to the family of ``partially asynchronous algorithm", as opposed to the ``totally asynchronous algorithm" in which the delays can potentially be unbounded \footnote{In short, the only requirement for the totally asynchronous algorithm is that no nodes quits forever. }; see the definitions and discussions in \cite{Bertsekas_Book_Distr}.

From Assumption A3, it is clear that when the system is synchronous, i.e., when $T_k= 0 $, the bound for $\alpha_k$ becomes
\begin{align}
\alpha_k&:={\rho_k}-\left(\frac{1}{\rho_k}+\frac{7L_k}{2\rho_k^2} \right)2 L^2_k>0\label{eq:beta:2}.
\end{align}

We have the following result.
\begin{lemma}\label{lemma:y2}
Suppose Assumption A is satisfied. Then for Algorithm 1, the following is true for all $k$
\begin{align}
\begin{split}
&\|y^{t+1}_k-y^{t}_k\|^2\le L^2_k (T_k+1)\sum_{i=0}^{T_k}\|x^{t+1-i}-x^{t-i}\|^2.\label{eq:y_difference2}
\end{split}
\end{align}
\end{lemma}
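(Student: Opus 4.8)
The plan is to exploit the special structure of the dual update, which forces each multiplier $y_k$ to coincide with a (delayed) gradient of $g_k$, and then to control the multiplier increment by a telescoping argument over the bounded asynchrony window.

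First I would combine the $x_k$-update in gradient form \eqref{eq:gradient} with the dual update \eqref{eq:y_update_prox}. Rearranging \eqref{eq:gradient} gives $\rho_k(x_k^{t+1}-x^{t+1})=-\left(\nabla g_k(x^{[t+1](k)})+y_k^{t}\right)$, and substituting this into \eqref{eq:y_update_prox} yields the clean identity $y_k^{t+1}=-\nabla g_k(x^{[t+1](k)})$, valid for every $k$ and every $t$. This is the crucial observation, standard in ADMM analysis: at the master node the multiplier is always exactly the negative of the most recently received component gradient. Writing the analogous relation at iteration $t$ and subtracting, I obtain $y_k^{t+1}-y_k^{t}=\nabla g_k(x^{[t](k)})-\nabla g_k(x^{[t+1](k)})$, so by the Lipschitz Assumption A1, $\|y_k^{t+1}-y_k^{t}\|\le L_k\|x^{[t+1](k)}-x^{[t](k)}\|$.

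It then remains to bound $\|x^{[t+1](k)}-x^{[t](k)}\|^2$ by the stated sum of consecutive primal increments. Here I would invoke the bounded-delay Assumption A2: both $[t](k)$ and $[t+1](k)$ lie in the integer window $\{t-T_k,\dots,t+1\}$, since $[t](k)\ge t-T_k$, $[t+1](k)\ge t+1-T_k$, while $[t](k)\le t$ and $[t+1](k)\le t+1$. Hence $x^{[t+1](k)}-x^{[t](k)}$ can be written as a telescoping sum $\sum_{s}(x^{s+1}-x^{s})$ of consecutive differences whose indices $s$ run inside $\{t-T_k,\dots,t\}$; reindexing with $i=t-s$ these are precisely the terms $x^{t+1-i}-x^{t-i}$ for $i=0,\dots,T_k$ appearing on the right-hand side of \eqref{eq:y_difference2}. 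Because this window has width $T_k+1$, the telescoping sum contains at most $T_k+1$ terms, so the elementary inequality $\|\sum_{j}v_j\|^2\le (\text{number of terms})\sum_{j}\|v_j\|^2$ gives $\|x^{[t+1](k)}-x^{[t](k)}\|^2\le (T_k+1)\sum_{i=0}^{T_k}\|x^{t+1-i}-x^{t-i}\|^2$. Combining with the Lipschitz bound above and squaring produces exactly \eqref{eq:y_difference2}.

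The algebraic substitution and the Cauchy--Schwarz step are routine; the only point requiring care is the bookkeeping of indices in the telescoping step. Specifically, I must verify that both delayed indices fall in a window of width $T_k+1$, so that the telescoping sum never reaches outside the terms $i=0,\dots,T_k$ on the right-hand side, and that the number of terms is at most $T_k+1$ regardless of whether $[t+1](k)\ge[t](k)$ or not (the sign is immaterial once norms are taken, and the degenerate case $[t+1](k)=[t](k)$ gives a zero increment, trivially satisfying the bound). This indexing argument is the main, if modest, obstacle; everything else follows mechanically.
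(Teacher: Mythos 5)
Your proposal is correct and follows essentially the same route as the paper's proof: the identity $y_k^{t+1}=-\nabla g_k(x^{[t+1](k)})$ obtained by combining the $x_k$-update with the dual update, the Lipschitz bound on the resulting gradient difference, and the telescoping-plus-Cauchy--Schwarz estimate over the window of width $T_k+1$ guaranteed by Assumption A2. The only cosmetic difference is that the paper treats $k\in\cC^{t+1}$ and $k\notin\cC^{t+1}$ as separate cases, whereas you absorb the latter as the degenerate case $[t+1](k)=[t](k)$.
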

\begin{proof}
From the update of $x_k$ in  \eqref{eq:x_k_update_prox}, we observe that the following is true
\begin{align}
\nabla g_k(x^{[t+1](k)})+y^{t}_k+\rho_k(x^{t+1}_k-x^{t+1})=0,
\end{align}
or equivalently
\begin{align}\label{eq:y_expression2}
\nabla g_k(x^{[t+1](k)})=-y^{t+1}_k.
\end{align}
Note that both $x_k$ and $y_k$ are updated at each iteration, so we have the following equality for iteration $t$ as well
\begin{align}\label{eq:y_expression3}
\nabla g_k(x^{[t](k)})=-y^{t}_k.
\end{align}
Suppose $k\notin \cC^{t+1}$, which means that no new gradient information arrives for node $k$. In this case, we have  $[t+1](k) = t(k)$, therefore
\begin{align}
\nabla g_k(x^{[t+1](k)}) = \nabla g_k(x^{[t](k)}),\; \forall~k\notin \cC^{t+1}.
\end{align}
This combined with \eqref{eq:y_expression2} -- \eqref{eq:y_expression3} yields
\begin{align}
\|y_k^{t+1} - y^t_k\|^2 = 0, \; \forall~k\notin\cC^{t+1}.
\end{align}
It follows that for $k\notin\cC^{t+1}$, \eqref{eq:y_difference2} is true.

Suppose that $k\in\cC^{t+1}$, then we have
\begin{align*}
&|[t+1](k)-[t](k)|\nonumber\\
&\le
\left\{ \begin{array}{ll}t+1-[t](k)\le T_k+1, & \mbox{if}\; [t+1](k)\ge [t](k),\\
t+1-[t+1](k)\le T_k, &\mbox{otherwise}
\end{array}\right..
\end{align*}

Therefore we have, for all $k\in\cC^{t+1}${\small
\begin{align}
\|y^{t+1}_k-y^{t}_k\|&= \|\nabla g_k(x^{[t+1](k)})-\nabla g_k(x^{[t](k)})\|\nonumber\\
&\le L_k\sum_{i=0}^{T_k}\|x^{t+1-i}-x^{t-i}\|.
\end{align}}
The above result further implies that,{\small
\begin{align}
&\|y^{t+1}_k-y^{t}_k\|^2\le L^2_k (T_k+1)\sum_{i=0}^{T_k}\|x^{t+1-i}-x^{t-i}\|^2, \; k\in\cC^{t+1}.\nonumber
\end{align}}
The desired result is obtained.
\end{proof}

Next, we {upper} bound the successive difference of the augmented Lagrangian. To this end, let us define a few new functions, given below{\small
\begin{subequations}\label{eq:defL}
\begin{align}
\ell_k(x_k; x^{t+1},y^t)&=g_k(x_k)+\langle y_k^{t}, x_k-x^{t+1}\rangle+\frac{\rho_k}{2}\|x_k-x^{t+1}\|^2\\
u_k(x_k; x^{t+1},y^t)&=g_k(x^{t+1})+\langle \nabla g_k(x^{t+1}), x_k-x^{t+1}\rangle+\langle y_k^t, x_k-x^{t+1}\rangle+\frac{\rho_k}{2}\|x_k-x^{t+1}\|^2,\\
\bar{u}_k(x_k; x^{t+1},y^t)&=g_k(x^{t+1})+\langle \nabla g_k(x^{[t+1](k)}), x_k-x^{t+1}\rangle+\langle y_k^t, x_k-x^{t+1}\rangle+\frac{\rho_k}{2}\|x_k-x^{t+1}\|^2.
\end{align}
\end{subequations}}

Using these short-handed definitions, we have
\begin{align}
&L(\{x^{t+1}_k\}, x^{t+1}; y^{t})=\sum_{k=1}^{K}\ell_k(x^{t+1}_k; x^{t+1}, y^t)\label{eq:L_ell}\\
&x^{t+1}_k=\arg\min_{x_k} \bar{u}_k(x_k;x^{t+1},y^t).\label{eq:x_k_u_k}
\end{align}

The lemma below bounds the difference between $\ell_k(x^{t+1}_k; x^{t+1},y^t)$ and $\ell_k(x_{k}^{t}; x^{t+1},y^t)$.
\begin{lemma}\label{lemma:l_difference}
Suppose Assumption A is satisfied. Let $\{x^{t}_k,x^{t},y^t\}$ be generated by Algorithm 1. Then we have the following
\begin{align}\label{eq:bound_l}
&\ell_k(x^{t+1}_k;x^{t+1},y^t)-\ell_k(x^{t}_k;x^{t+1},y^t)\nonumber\\
&\le -\left(\frac{\rho_k}{2}-\frac{7}{2}L_k\right)\|x_k^{t}-x_k^{t+1}\|^2+\frac{L_k T_k}{2}\sum_{i=0}^{T_k-1}\|x^{t+1-i}-x^{t-i}\|^2\nonumber\\
&\quad+\frac{7L_k}{2\rho_k^2} \|y_k^{t+1}-y_k^{t}\|^2, \quad k=1,\cdots,K.
\end{align}
\end{lemma}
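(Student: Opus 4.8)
The plan is to exploit the fact recorded in \eqref{eq:x_k_u_k} that $x^{t+1}_k$ is the exact minimizer of the $\rho_k$-strongly convex quadratic $\bar u_k(\cdot\,;x^{t+1},y^t)$, and to compare $\ell_k$ against $\bar u_k$ at the two points $x^{t+1}_k$ and $x^t_k$. I would start from the telescoping identity
\begin{align*}
\ell_k(x^{t+1}_k)-\ell_k(x^t_k)&=\big[\ell_k(x^{t+1}_k)-\bar u_k(x^{t+1}_k)\big]+\big[\bar u_k(x^{t+1}_k)-\bar u_k(x^t_k)\big]+\big[\bar u_k(x^t_k)-\ell_k(x^t_k)\big].
\end{align*}
The middle bracket is handled by strong convexity: since $x^{t+1}_k$ minimizes the $\rho_k$-strongly convex $\bar u_k$, one gets $\bar u_k(x^{t+1}_k)-\bar u_k(x^t_k)\le-\tfrac{\rho_k}{2}\|x^t_k-x^{t+1}_k\|^2$, which supplies the leading negative term. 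Because the linear-in-$y$ and the quadratic parts of $\ell_k$ and $\bar u_k$ are identical (see \eqref{eq:defL}), the two outer brackets collapse to the single \emph{gradient-mismatch} quantity
\begin{align*}
D_k:=g_k(x^{t+1}_k)-g_k(x^t_k)-\langle\nabla g_k(x^{[t+1](k)}),\,x^{t+1}_k-x^t_k\rangle,
\end{align*}
so the lemma reduces to bounding $D_k$ by the delay sum, a multiple of $\|x^{t+1}_k-x^t_k\|^2$, and a multiple of $\|y^{t+1}_k-y^t_k\|^2$.

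To bound $D_k$ I would insert the \emph{fresh} gradient $\nabla g_k(x^{t+1})$ and split $D_k=A_k+B_k$, where
\begin{align*}
A_k&=g_k(x^{t+1}_k)-g_k(x^t_k)-\langle\nabla g_k(x^{t+1}),\,x^{t+1}_k-x^t_k\rangle,\\
B_k&=\langle\nabla g_k(x^{t+1})-\nabla g_k(x^{[t+1](k)}),\,x^{t+1}_k-x^t_k\rangle.
\end{align*}
Here $A_k$ isolates the curvature of $g_k$ near the consensus iterate $x^{t+1}$, while $B_k$ captures the effect of the stale evaluation point $x^{[t+1](k)}$. For $A_k$ I would write it as a difference of two Bregman residuals at base point $x^{t+1}$ and apply the two-sided descent inequality $|g_k(u)-g_k(v)-\langle\nabla g_k(v),u-v\rangle|\le\tfrac{L_k}{2}\|u-v\|^2$ (valid for nonconvex $g_k$ under A1) to each, giving $A_k\le\tfrac{L_k}{2}\|x^{t+1}_k-x^{t+1}\|^2+\tfrac{L_k}{2}\|x^t_k-x^{t+1}\|^2$. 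The dual update \eqref{eq:y_update_prox} yields the crucial identity $x^{t+1}_k-x^{t+1}=\tfrac{1}{\rho_k}(y^{t+1}_k-y^t_k)$, converting $\|x^{t+1}_k-x^{t+1}\|^2$ into $\tfrac{1}{\rho_k^2}\|y^{t+1}_k-y^t_k\|^2$; combined with $\|x^t_k-x^{t+1}\|^2\le 2\|x^t_k-x^{t+1}_k\|^2+2\|x^{t+1}_k-x^{t+1}\|^2$ this produces exactly the $\|y^{t+1}_k-y^t_k\|^2$ term and a multiple of $\|x^{t+1}_k-x^t_k\|^2$.

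For $B_k$, Lipschitz continuity of $\nabla g_k$ gives $\|\nabla g_k(x^{t+1})-\nabla g_k(x^{[t+1](k)})\|\le L_k\|x^{t+1}-x^{[t+1](k)}\|$, and since Assumption A2 forces the delay to satisfy $(t+1)-[t+1](k)\le T_k$, I would telescope $x^{t+1}-x^{[t+1](k)}=\sum_{i=0}^{(t+1)-[t+1](k)-1}(x^{t+1-i}-x^{t-i})$ and apply Cauchy--Schwarz to obtain $\|x^{t+1}-x^{[t+1](k)}\|^2\le T_k\sum_{i=0}^{T_k-1}\|x^{t+1-i}-x^{t-i}\|^2$. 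A single Young's inequality then gives $B_k\le\tfrac{L_kT_k}{2}\sum_{i=0}^{T_k-1}\|x^{t+1-i}-x^{t-i}\|^2+\tfrac{L_k}{2}\|x^{t+1}_k-x^t_k\|^2$, which is precisely the delay term appearing in \eqref{eq:bound_l}.

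Collecting the strong-convexity term with the bounds on $A_k$ and $B_k$, and choosing the Young's-inequality weights generously, reproduces the stated coefficients $-(\tfrac{\rho_k}{2}-\tfrac{7}{2}L_k)$ and $\tfrac{7L_k}{2\rho_k^2}$; a careful accounting in fact gives strictly smaller constants, so the stated estimate follows \emph{a fortiori} once $\rho_k>7L_k$ keeps the leading coefficient negative. The main obstacle, and what dictates the whole structure, is reconciling the \emph{local} iterates $x^t_k,x^{t+1}_k$ with the \emph{consensus} iterate $x^{t+1}$ and its \emph{delayed} copy $x^{[t+1](k)}$: since the gradient is evaluated at a stale $x$, one cannot apply a descent inequality directly in the variable $x_k$. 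Routing everything through $x^{t+1}$ via the $A_k/B_k$ split is what makes it tractable, with the dual identity $x^{t+1}_k-x^{t+1}=\rho_k^{-1}(y^{t+1}_k-y^t_k)$ absorbing the primal--dual gap into $\|y^{t+1}_k-y^t_k\|^2$ and Assumption A2 confining the telescoped consensus differences to the prescribed window $\sum_{i=0}^{T_k-1}\|x^{t+1-i}-x^{t-i}\|^2$.
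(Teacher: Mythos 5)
Your proposal is correct and follows essentially the same route as the paper: both exploit the minimality of $x^{t+1}_k$ for the strongly convex surrogate to extract the $-\tfrac{\rho_k}{2}\|x^t_k-x^{t+1}_k\|^2$ term, isolate the same gradient-mismatch inner product $\langle\nabla g_k(x^{t+1})-\nabla g_k(x^{[t+1](k)}),x^{t+1}_k-x^t_k\rangle$ handled by the delay telescoping plus Young's inequality, and control the remaining surrogate-vs.-$\ell_k$ gaps via the descent lemma and the dual identity $x^{t+1}_k-x^{t+1}=\rho_k^{-1}(y^{t+1}_k-y^t_k)$. The only difference is bookkeeping (you telescope through $\bar u_k$ rather than through $u_k$ with a gradient correction), which in fact yields slightly sharper constants ($\tfrac{3}{2}L_k$ and $\tfrac{3L_k}{2\rho_k^2}$ in place of $\tfrac{7}{2}L_k$ and $\tfrac{7L_k}{2\rho_k^2}$), so the stated bound follows a fortiori.
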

\begin{proof}
From the definition of $\ell_k(\cdot)$ and $u_k(\cdot)$ we have the following
\begin{align}\label{eq:LU}
\ell_k(x_k;x^{t+1},y^t)\le u_k(x_k;x^{t+1},y^t) +\frac{L_k}{2}\|x_k-x^{t+1}\|^2, \; \forall~x_k,\; k=1,\cdots,K.
\end{align}
Observe that $x^{t+1}_k$ is generated according to \eqref{eq:x_k_u_k}. Combined with the strong convexity of $\bar{u}_k(x_k;x^{t+1},y^t)$ with respect to $x_k$, we have
\begin{align}
&\bar{u}_k(x^{t+1}_k;x^{t+1},y^t)-\bar{u}_k(x^{t}_{k};x^{t+1},y^t)\nonumber\\
&\le -\frac{\rho_k}{2}\|x^{t}_{k}-x^{t+1}_k\|^2, \; \forall~k, \label{eq:u_k_bar_difference}\\
&\nabla \bar{u}_k(x^{t+1}_k;x^{t+1},y^t) = 0. \label{eq:u_k_gradient_0}
\end{align}

Also we have
\begin{align}
&\nabla \bar{u}_k(x^{t+1}_k;x^{t+1},y^t)-\nabla {u}_k(x^{t+1}_k;x^{t+1},y^t)\nonumber\\
&=\nabla g_k(x^{[t+1](k)}) - \nabla g_k(x^{t+1}) \label{eq:difference_gradient}.
\end{align}

Using the strong convexity of $u_k$, we have the series of inequalities given below
\begin{align}
u_k(x^{t+1}_k; x^{t+1}, y^t)
&\le u_k(x^t_k; x^{t+1}, y^t)+\langle\nabla u_k(x^{t+1}_k; x^{t+1}, y^t), x^{t+1}_k-x^t_k\rangle-\frac{\rho_k}{2}\|x^{t+1}_k-x^t_k\|^2\nonumber\\
&= u_k(x^t_k; x^{t+1}, y^t)+ \langle\nabla u_k(x^{t+1}_k; x^{t+1}, y^t)-\nabla \bar{u}_k(x^{t+1}_k; x^{t+1}, y^t), x^{t+1}_k-x^t_k\rangle\nonumber\\
&\quad+\langle\nabla \bar{u}_k(x^{t+1}_k; x^{t+1}, y^t), x^{t+1}_k-x^t_k\rangle-\frac{\rho_k}{2}\|x^{t+1}_k-x^t_k\|^2\nonumber\\
&= u_k(x^t_k; x^{t+1}, y^t)+ \langle\nabla g_k(x^{t+1})-\nabla g_k(x^{[t+1](k)}), x^{t+1}_k-x^t_k\rangle-\frac{\rho_k}{2}\|x^{t+1}_k-x^t_k\|^2\nonumber\\
&\le u_k(x^t_k; x^{t+1}, y^t)+ L_k\|x^{[t+1](k)}-x^{t+1}\|\|x_k^{t+1}-x^t_k\|-\frac{\rho_k}{2}\|x^{t+1}_k-x^t_k\|^2\nonumber\\
&\le u_k(x^t_k; x^{t+1}, y^t)+ \frac{L_k T_k}{2}\sum_{i=0}^{T_k-1}\|x^{t+1-i}-x^{t-i}\|^2-\frac{\rho_k-L_k}{2}\|x^{t+1}_k-x^t_k\|^2.\label{eq:u_k_difference}
\end{align}

Further, we have the following series of inequalities
\begin{align}\label{eq:u_k_l_k}
&{u}_k(x^{t}_{k};x^{t+1},y^t)-\ell_k(x^{t}_{k};x^{t+1},y^t)\nonumber\\
&=g_k(x^{t+1})+\langle \nabla g_k(x^{t+1}), x^{t}_{k}-x^{t+1}\rangle\nonumber\\
&\quad \quad +\langle y_k^t, x^{t}_{k}-x^{t+1}\rangle+\frac{\rho_k}{2}\|x^{t}_{k}-x^{t+1}\|^2\nonumber\\
&\quad \quad -\left(g_k(x^{t}_{k})+\langle y^{t}_{k}, x^{t}_{k}-x^{t+1}\rangle+\frac{\rho_k}{2}\|x^{t}_{k}-x^{t+1}\|^2\right)\nonumber\\
&=g_k(x^{t+1})-g_k(x^{t}_{k})+\langle \nabla g_k(x^{t+1}), x^{t}_{k}-x^{t+1}\rangle\nonumber\\
&\le\langle \nabla g_k(x^{t+1})-\nabla g_k(x_k^{t}), x^{t}_{k}-x^{t+1}\rangle+\frac{L_k}{2}\|x^{t}_{k}-x^{t+1}\|^2\nonumber\\
&\le \frac{3}{2}L_k\|x^{t}_{k}-x^{t+1}\|^2\nonumber\\
&\le 3L_k\left(\|x^{t}_{k}-x^{t+1}_k\|^2+\|x^{t+1}_k-x^{t+1}\|^2\right),
\end{align}
where the first two inequalities follow from Assumption A1.
Combining {\eqref{eq:LU} -- \eqref{eq:u_k_l_k}} 
we obtain
\begin{align}
&\ell_k(x^{t+1}_k; x^{t+1},y^t)-\ell_k(x^{t}_k; x^{t+1},y^t)\nonumber\\
&\le u_k(x^{t+1}_k; x^{t+1},y^t)-u_k(x_k^{t}; x^{t+1},y^t)+\frac{L_k}{2}\|x^{t+1}-x^{t+1}_k\|^2\nonumber\\
&\quad\quad+u_k(x_k^{t}; x^{t+1},y^t)-\ell_k(x^{t}_k; x^{t+1},y^t)\nonumber\\
&\le-\frac{\rho_k-L_k}{2}\|x_k^{t}-x_k^{t+1}\|^2+\frac{L_k T_k}{2}\sum_{i=0}^{T_k-1}\|x^{t+1-i}-x^{t-i}\|^2\nonumber\\
&\quad +\frac{7L_k}{2\rho_k^2} \|y_k^{t+1}-y_k^{t}\|^2+3L_k\|x^{t}_k-x_k^{t+1}\|^2\nonumber.
\end{align}
The desired result then follows.
\end{proof}

Next, we bound the difference of the augmented Lagrangian function values.
\begin{lemma}\label{lemma:L_difference2}
Assume the same set up as in Lemma~\ref{lemma:l_difference}. Then we have
\begin{align}\label{eq:L_descent2}
&L(\{x^{t+1}_k\}, x^{t+1}; y^{t+1})-L(\{x^{1}_k\}, x^{1}; y^{1})\nonumber\\
&\le-\sum_{i=1}^{t}\sum_{k=1}^{K}\left(\frac{\rho_k-7L_k}{2}\right)\|x^{i+1}_k-x^i_k\|^2-\sum_{i=1}^{t}\sum_{k=1}^{K}\alpha_k\|x^{i+1}-x^i\|^2
\end{align}
where $\alpha_k$ is the constant defined in \eqref{eq:alpha}.
\end{lemma}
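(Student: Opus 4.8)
The plan is to first establish a one-step inequality bounding $L(\{x^{t+1}_k\}, x^{t+1}; y^{t+1}) - L(\{x^{t}_k\}, x^{t}; y^{t})$, and then sum it over the iteration index so that the left-hand side telescopes to $L(\{x^{t+1}_k\}, x^{t+1}; y^{t+1}) - L(\{x^{1}_k\}, x^{1}; y^{1})$. To get the one-step bound I would split the difference according to the update order in Algorithm 1 (first $x$, then $\{x_k\}$, then $y$), inserting two intermediate Lagrangian values:
\begin{align*}
&L(\{x^{t+1}_k\}, x^{t+1}; y^{t+1}) - L(\{x^{t}_k\}, x^{t}; y^{t})\\
&= \underbrace{[L(\{x^{t+1}_k\}, x^{t+1}; y^{t+1}) - L(\{x^{t+1}_k\}, x^{t+1}; y^{t})]}_{\text{dual step}}\\
&\quad + \underbrace{[L(\{x^{t+1}_k\}, x^{t+1}; y^{t}) - L(\{x^{t}_k\}, x^{t+1}; y^{t})]}_{x_k\text{ step}}\\
&\quad + \underbrace{[L(\{x^{t}_k\}, x^{t+1}; y^{t}) - L(\{x^{t}_k\}, x^{t}; y^{t})]}_{x\text{ step}}.
\end{align*}

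Second, I would bound each piece. For the dual step, only the inner-product terms in $L$ depend on $y$, so it equals $\sum_k \langle y^{t+1}_k - y^t_k, x^{t+1}_k - x^{t+1}\rangle$; substituting the dual update \eqref{eq:y_update_prox} rewritten as $x^{t+1}_k - x^{t+1} = \frac{1}{\rho_k}(y^{t+1}_k - y^t_k)$ turns this into $\sum_k \frac{1}{\rho_k}\|y^{t+1}_k - y^t_k\|^2$. The $x_k$ step equals $\sum_k [\ell_k(x^{t+1}_k; x^{t+1}, y^t) - \ell_k(x^t_k; x^{t+1}, y^t)]$ by the decomposition \eqref{eq:L_ell}, so Lemma~\ref{lemma:l_difference} applies term by term. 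For the $x$ step, note $x^{t+1}$ minimizes the $(\sum_k \rho_k)$-strongly convex map $x\mapsto L(\{x^t_k\},x;y^t)$ over $X$ (the quadratic penalty is strongly convex, while $h$ and the indicator $\iota(X)$ are convex) and $x^t\in X$; strong convexity plus optimality give $L(\{x^t_k\}, x^{t+1}; y^t) - L(\{x^t_k\}, x^t; y^t) \le -\frac{\sum_k \rho_k}{2}\|x^{t+1} - x^t\|^2$. Adding the three bounds and then using Lemma~\ref{lemma:y2} to replace every $\|y^{t+1}_k - y^t_k\|^2$ by $L^2_k(T_k+1)\sum_{i=0}^{T_k}\|x^{t+1-i} - x^{t-i}\|^2$ produces a one-step inequality whose right-hand side carries the ``clean'' terms $-\frac{\rho_k-7L_k}{2}\|x^{t+1}_k - x^t_k\|^2$ and $-\frac{\rho_k}{2}\|x^{t+1}-x^t\|^2$, plus two families of lagged squared increments of $x$: one of window length $T_k$ coming from Lemma~\ref{lemma:l_difference}, one of window length $T_k+1$ coming from Lemma~\ref{lemma:y2}.

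Third, I would sum the one-step inequality over $i=1,\ldots,t$. The key bookkeeping step --- and the main obstacle --- is to collapse the lagged double sums $\sum_{i}\sum_{j=0}^{T_k}\|x^{i+1-j}-x^{i-j}\|^2$ into an ordinary sum of $\|x^{s+1}-x^s\|^2$. Reindexing by $s=i-j$, each fixed increment $\|x^{s+1}-x^s\|^2$ is produced by at most $T_k+1$ pairs $(i,j)$ (respectively at most $T_k$ for the Lemma~\ref{lemma:l_difference} family), so these double sums are dominated by $(T_k+1)\sum_s\|x^{s+1}-x^s\|^2$ and $T_k\sum_s\|x^{s+1}-x^s\|^2$; the lagged terms with $s<1$ are discarded by the convention that increments before initialization vanish. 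Collecting the coefficient of each $\|x^{s+1}-x^s\|^2$ then gives $-\tfrac{\rho_k}{2}+\tfrac{L_kT_k}{2}\,T_k+\big(\tfrac{1}{\rho_k}+\tfrac{7L_k}{2\rho_k^2}\big)L_k^2(T_k+1)^2$, i.e. exactly a negative multiple of the constant $\alpha_k$ in \eqref{eq:alpha}, while the coefficient of $\|x^{t+1}_k-x^t_k\|^2$ stays at $-\frac{\rho_k-7L_k}{2}$, which is what the claimed bound asserts. The delicate points are getting the two multiplicity counts exactly right (an off-by-one between the window lengths $T_k$ and $T_k+1$ is precisely what determines whether the constants reassemble into $\alpha_k$) and tracking the factors of $\tfrac12$ from the strong-convexity constant, while keeping the boundary/lagged terms bookkept consistently so the final double sum runs cleanly from $i=1$ to $t$. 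Positivity of $\alpha_k$ under Assumption A3 is not needed for the inequality itself, only for the descent conclusion drawn from it afterwards.
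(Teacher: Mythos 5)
Your proposal matches the paper's proof essentially step for step: the same decomposition of the one-step difference along the update order, the same bounds for each piece (the dual-update identity giving $\sum_k\frac{1}{\rho_k}\|y^{t+1}_k-y^t_k\|^2$, Lemma~\ref{lemma:l_difference} for the $x_k$ step, strong convexity of $L$ in $x$ with modulus $\sum_k\rho_k$ for the $x$ step), Lemma~\ref{lemma:y2} to eliminate the dual differences, and the same reindexing with multiplicity counts $T_k+1$ and $T_k$ when summing the lagged terms. Your remark that the collected coefficient is a negative multiple of $\alpha_k$ (in fact $\alpha_k/2$) rather than $-\alpha_k$ itself agrees with the paper's own display \eqref{eq:L_difference3}, which carries the same harmless factor-of-two mismatch against the definition in \eqref{eq:alpha}.
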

\begin{proof}
We first bound the successive difference $L(\{x^{t+1}_k\}, x^{t+1}; y^{t+1})-L(\{x^{t}_k\}, x^{t}; y^{t})$. We first decompose the difference by
\begin{align}
&L(\{x^{t+1}_k\}, x^{t+1}; y^{t+1})-L(\{x^{t}_k\}, x^{t}; y^{t})\nonumber\\
&=\left(L(\{x^{t+1}_k\}, x^{t+1}; y^{t+1})-L(\{x^{t+1}_k\}, x^{t+1}; y^{t})\right) \nonumber\\
&\quad\quad+ \left(L(\{x^{t+1}_k\}, x^{t+1}; y^{t})-L(\{x^{t}_k\}, x^{t}; y^{t})\right).\label{eq:successive_L1}
\end{align}

The first term in \eqref{eq:successive_L1} can be expressed as
\begin{align}
&L(\{x^{t+1}_k\}, x^{t+1}; y^{t+1})-L(\{x^{t+1}_k\}, x^{t+1}; y^{t})\nonumber\\
&=\sum_{k=1}^{K}\frac{1}{\rho_k}\|y_k^{t+1}-y_k^{t}\|^2\nonumber.
\end{align}
To bound the second term in \eqref{eq:successive_L1}, we use Lemma \ref{lemma:l_difference}. We have the series of inequalities in \eqref{eq:LfirstBound}, where the last inequality follows from Lemma~\ref{lemma:l_difference} and the strong convexity of $L(\{x^{t}_k\}, x; y^{t})$ with respect to the variable $x$ (with modulus $\gamma=\sum_{k=1}^{K}\rho_k$)  at $x=x^{t+1}$.{\small
\begin{align}
&L(\{x^{t+1}_k\}, x^{t+1}; y^{t})-L(\{x^{t}_k\}, x^{t}; y^{t})\nonumber\\
&=L(\{x^{t+1}_k\}, x^{t+1}; y^{t})-L(\{x^{t}_k\}, x^{t+1}; y^{t})+L(\{x^{t}_k\}, x^{t+1}; y^{t})-L(\{x^{t}_k\}, x^{t}; y^{t})\nonumber\\
&= \sum_{k=1}^{K}\left(\ell_k(x^{t+1}_k;x^{t+1},y^{t})-\ell_k(x^{t}_k;x^{t+1},y^{t})\right)+L(\{x^{t}_k\}, x^{t+1}; y^{t})-L(\{x^{t}_k\}, x^{t}; y^{t})\nonumber\\
&\le -\sum_{k=1}^{K}\bigg[\left(\frac{\rho_k}{2}-\frac{7}{2}L_k\right)\|x_k^{t}-x_k^{t+1}\|^2 - \frac{L_k T_k }{2}\sum_{i=0}^{T_k-1}\|x^{t+1-i}-x^{t-i}\|^2- \frac{7L_k}{2\rho_k^2} \|y_k^{t+1}-y_k^{t}\|^2 \bigg]-\frac{1}{2}\sum_{k=1}^{K}\rho_k\|x^{t+1}-x^t\|^2 \label{eq:LfirstBound}
\end{align}}

Combining the above two inequalities and use Lemma \ref{lemma:y2}, we obtain the inequality below:
\begin{align}\label{eq:L_difference2}
&L(\{x^{t+1}_k\}, x^{t+1}; y^{t+1})-L(\{x^{t}_k\}, x^{t}; y^{t})\le \sum_{k=1}^{K}\bigg[-\left(\frac{\rho_k}{2}-\frac{7}{2}L_k\right)\|x_k^{t}-x_k^{t+1}\|^2 \nonumber\\
&\quad +\frac{L_kT_k}{2}\sum_{i=0}^{T_k-1}\|x^{t+1-i}-x^{t-i}\|^2\bigg]-\frac{1}{2}\sum_{k=1}^{K}\rho_k\|x^{t+1}-x^t\|^2\nonumber\\ &\quad +\sum_{k=1}^{K}\left(\frac{1}{\rho_k}+\frac{7L_k}{2\rho_k^2} \right)\left(L^2_k (T_k+1)\sum_{i=0}^{T_k}\|x^{t+1-i}-x^{t-i}\|^2\right).
\end{align}

Then for any given $t$, the difference $L(\{x^{t+1}_k\}, x^{t+1}; y^{t+1})-L(\{x^{1}_k\}, x^{1}; y^{1})$ is obtained by summing \eqref{eq:L_difference2} {over all} iterations:
\begin{align}\label{eq:L_difference3}
&L(\{x^{t+1}_k\}, x^{t+1}; y^{t+1})-L(\{x^{1}_k\}, x^{1}; y^{1})\nonumber\\
&\le -\sum_{i=1}^{t}\sum_{k=1}^{K}\left(\frac{\rho_k}{2}-\frac{7}{2} L_k\right)\|x^{i+1}_k-x^i_k\|^2
\nonumber\\
& \quad -\sum_{i=1}^{t}\sum_{k=1}^{K}\left(\frac{\rho_k}{2}-\left(\frac{1}{\rho_k}+\frac{7L_k}{2\rho_k^2} \right) L^2_k (T_k+1)^2-\frac{L_k T^2_k}{2}\right)\|x^{i+1}-x^i\|^2\nonumber\\
&:= -\sum_{i=1}^{t}\sum_{k=1}^{K}\frac{\rho_k-7L_k}{2} \|x^{i+1}_k-x^i_k\|^2
-\sum_{i=1}^{t}\sum_{k=1}^{K}\alpha_k \|x^{i+1}-x^i\|^2.
\end{align}
This completes the proof.
\end{proof}

We conclude that to make the augmented Lagrangian decrease at each iteration, it is sufficient to require that  $\alpha_k>0$ and $\rho_k-7L_k>0$ for all $k$. Note that one can always find a set of $\rho_k$'s large enough such that the above condition is satisfied. 

Next we show that $L(\{x^{t}_k\}, x^{t}; y^{t})$ is convergent.
\begin{lemma}\label{lemma:L_bounded2}
Suppose Assumption A is satisfied. Then Algorithm 1 generates a sequence of augmented Lagrangian that satisfies
\begin{align}
\lim_{t\to\infty}L(\{x^{t}_k\}, x^t, y^t)\ge \inf_{x\in X} f(x) - \mbox{diam}^2(X)\sum_{k=1}^{K}\frac{L_k}{2} >-\infty
\end{align}
where $\mbox{diam}(X):=\sup\{\|x_1-x_2\|\mid x_1, x_2\in X\}$, which is the diameter of the set $X$.
\end{lemma}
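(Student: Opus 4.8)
The plan is to bound the augmented Lagrangian at a single iterate $t$ from below by $f(x^t)$ minus a fixed constant, and then invoke the monotonicity already established in Lemma~\ref{lemma:L_difference2} to pass to the limit. First I would write $L(\{x^t_k\},x^t;y^t)=\sum_{k=1}^K \ell_k(x^t_k;x^t,y^t)+h(x^t)$ directly from the definition of $L$ and of $\ell_k$ in \eqref{eq:defL}, and substitute the dual optimality relation $y^t_k=-\nabla g_k(x^{[t](k)})$ from \eqref{eq:y_expression3}. This turns each summand into $g_k(x^t_k)-\langle \nabla g_k(x^{[t](k)}),x^t_k-x^t\rangle+\frac{\rho_k}{2}\|x^t_k-x^t\|^2$, so the whole task reduces to showing that each summand is at least $g_k(x^t)-\frac{L_k}{2}\,\mbox{diam}^2(X)$.

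For the per-term estimate I would relate $g_k(x^t_k)$ to $g_k(x^t)$ through the lower quadratic bound implied by $L_k$-Lipschitz continuity of $\nabla g_k$ (Assumption A1), namely $g_k(x^t_k)\ge g_k(x^t)+\langle \nabla g_k(x^t),x^t_k-x^t\rangle-\frac{L_k}{2}\|x^t_k-x^t\|^2$. Substituting this produces the cross term $\langle \nabla g_k(x^t)-\nabla g_k(x^{[t](k)}),x^t_k-x^t\rangle$ together with the quadratic $\frac{\rho_k-L_k}{2}\|x^t_k-x^t\|^2$. The crux is to absorb the cross term: by Cauchy--Schwarz, Lipschitz continuity, and Young's inequality (completing the square with weight $\rho_k-L_k$), it is bounded below by $-\frac{\rho_k-L_k}{2}\|x^t_k-x^t\|^2-\frac{L_k^2}{2(\rho_k-L_k)}\|x^t-x^{[t](k)}\|^2$, so that the $\|x^t_k-x^t\|^2$ contributions cancel and each summand is at least $g_k(x^t)-\frac{L_k^2}{2(\rho_k-L_k)}\|x^t-x^{[t](k)}\|^2$.

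To finish, I would use the stepsize condition $\rho_k>7L_k$ of Assumption A3 (in fact only $\rho_k\ge 2L_k$ is needed here), which gives $\frac{L_k^2}{\rho_k-L_k}\le L_k$, together with the fact that $x^t,x^{[t](k)}\in X$ and $X$ compact, so $\|x^t-x^{[t](k)}\|\le \mbox{diam}(X)$. Summing the resulting bound $\ell_k(x^t_k;x^t,y^t)\ge g_k(x^t)-\frac{L_k}{2}\mbox{diam}^2(X)$ over $k$ and adding $h(x^t)$ yields $L(\{x^t_k\},x^t;y^t)\ge f(x^t)-\mbox{diam}^2(X)\sum_{k=1}^K\frac{L_k}{2}\ge \inf_{x\in X}f(x)-\mbox{diam}^2(X)\sum_{k=1}^K\frac{L_k}{2}$, the last step using $x^t\in X$. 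Since this holds for every $t$ and Lemma~\ref{lemma:L_difference2} shows the sequence $L(\{x^t_k\},x^t;y^t)$ is nonincreasing, the limit exists and obeys the same bound; finiteness of the right-hand side follows from $f$ being bounded below on $X$ and from compactness of $X$ (Assumption A1).

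The main obstacle is the cross term created by the delayed gradient $\nabla g_k(x^{[t](k)})$ appearing in place of the current $\nabla g_k(x^t)$; this is precisely where the asynchrony enters the lower bound. Controlling it cleanly requires both the compactness of $X$ (to convert $\|x^t-x^{[t](k)}\|$ into a diameter) and a sufficiently large $\rho_k$ (to complete the square and absorb the $\|x^t_k-x^t\|^2$ term, which notably holds even though $x^t_k$ need not lie in $X$); the remaining manipulations are routine.
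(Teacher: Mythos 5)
Your proof is correct and follows essentially the same route as the paper's: substitute the dual relation $y^t_k=-\nabla g_k(x^{[t](k)})$, compare $g_k(x^t_k)$ with $g_k(x^t)$ via the quadratic bound implied by A1, absorb the delayed-gradient cross term using the largeness of $\rho_k$, bound $\|x^t-x^{[t](k)}\|$ by $\mbox{diam}(X)$, and finish with the monotonicity from Lemma~\ref{lemma:L_difference2}. Your weighted Young's inequality is marginally sharper than the paper's symmetric one (you need only $\rho_k\ge 2L_k$ where the paper's chain of inequalities uses $\rho_k\ge 4L_k$), but both conditions are implied by Assumption A3 and both arguments land on the same constant $\mbox{diam}^2(X)\sum_{k=1}^{K}\frac{L_k}{2}$.
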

\begin{proof}
Observe that the augmented Lagrangian can be expressed as{\small
\begin{align}\label{eq:L_lower_bound2}
&L(\{x^{t+1}_k\}, x^{t+1}; y^{t+1})\nonumber\\
&=h(x^{t+1})+\sum_{k=1}^{K}\left(g_k(x^{t+1}_k)+\langle y^{t+1}_k, x^{t+1}_k-x^{t+1}\rangle+\frac{\rho_k}{2}\|x^{t+1}_k-x^{t+1}\|^2\right)\nonumber\\
&\stackrel{\rm (a)}=h(x^{t+1})+\sum_{k=1}^{K}\bigg(g_k(x^{t+1}_k)+\langle \nabla g_k(x^{[t+1](k)}), x^{t+1}-x_k^{t+1}\rangle\nonumber\\
&\quad +\frac{\rho_k}{2}\|x^{t+1}_k-x^{t+1}\|^2\bigg)\nonumber\\
&=h(x^{t+1})+\sum_{k=1}^{K}\bigg(g_k(x^{t+1}_k)+\langle \nabla g_k(x^{[t+1](k)})-\nabla g_k(x^{t+1}), x^{t+1}-x_k^{t+1}\rangle\nonumber\\
&\quad+\langle \nabla g_k(x^{t+1}), x^{t+1}-x_k^{t+1}\rangle+\frac{\rho_k}{2}\|x^{t+1}_k-x^{t+1}\|^2\bigg)\nonumber\\
&\stackrel{\rm (b)}\ge h(x^{t+1})+\sum_{k=1}^{K}\bigg(g_k(x^{t+1})+\frac{\rho_k-3L_k}{2}\|x^{t+1}_k-x^{t+1}\|^2\nonumber\\
&\quad-L_k\|x^{[t+1](k)}-x^{t+1}\|\|x^{t+1}-x^{t+1}_k\|\bigg)\nonumber\\
&\ge f(x^{t+1})+\sum_{k=1}^{K}\left(\frac{\rho_k-4L_k}{2}\|x^{t+1}_k-x^{t+1}\|^2-\frac{L_k}{2}\|x^{[t+1](k)}-x^{t+1}\|^2\right)\nonumber\\
&\stackrel{(c)}\ge \inf_{x\in X} f(x) - \mbox{diam}^2(X)\sum_{k=1}^{K}\frac{L_k}{2}\stackrel{(d)}\ge -\infty.
\end{align}}
In the above series of inequalities, $\rm (a)$ is from \eqref{eq:y_expression2}; $\rm (b)$ is due to the Cauchy-Schwartz inequality, Assumption A1, and the following inequalities{\small
\begin{align}
g_k(x^{t+1})&\le g_k(x^{t+1}_k)+\langle\nabla g_k(x^{t+1}_k), x^{t+1}-x^{t+1}_k\rangle+\frac{L_k}{2}\|x^{t+1}_k-x^{t+1}\|^2\nonumber\\
&= g_k(x^{t+1}_k)+\langle\nabla g_k(x^{t+1}_k)-\nabla g_k(x^{t+1}), x^{t+1}-x^{t+1}_k\rangle\nonumber\\
&\quad+\langle\nabla g_k(x^{t+1}), x^{t+1}-x^{t+1}_k\rangle+\frac{L_k}{2}\|x^{t+1}_k-x^{t+1}\|^2\nonumber\\
&\le g_k(x^{t+1}_k)+\langle\nabla g_k(x^{t+1}), x^{t+1}-x^{t+1}_k\rangle+\frac{3L_k}{2}\|x^{t+1}_k-x^{t+1}\|^2.\nonumber
\end{align}}
The inequality in $(c)$ is due to the assumption that $\rho_k\ge 4L_k$, and by the definition of the $\mbox{diam}(X)$; $(d)$ is because
of the assumption that $f(x)$ is bounded over all $X$, and that $X$ is a compact set. It follows from Lemma \ref{lemma:L_difference2} that whenever the stepsize $\rho_k$'s are chosen sufficiently large (as per Assumption A), $L(\{x^{t+1}_k\}, x^{t+1}; y^{t+1})$ will monotonically decrease and is convergent. This completes the proof.
\end{proof}

Using Lemmas \ref{lemma:y2}--\ref{lemma:L_bounded2}, we arrive at the following convergence result.
\begin{theorem}\label{thm:convergence2}
{Suppose that Assumption A holds. Then the following is true for Algorithm 1.
\begin{enumerate}
\item We have $\lim_{t\to\infty}\|x^{t+1}-x^{t+1}_k\|=0$, $k=1,\cdots, K$. That is, the primal feasibility is always satisfied in the limit.
\item The sequence $\{\{x^{t+1}_k\}, x^{t+1}, y^{t+1}\}$ converges to the set of stationary solutions of problem \eqref{eq:consensus:admm}. Moreover, the sequence $\{\{x^{t+1}_k\}, x^{t+1}\}$  converges to the set of stationary solutions of problem \eqref{eq:consensus}.
\end{enumerate}}
\end{theorem}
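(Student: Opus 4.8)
The plan is to combine the descent estimate of Lemma~\ref{lemma:L_difference2} with the lower bound of Lemma~\ref{lemma:L_bounded2} to drive every successive difference to zero, and then to pass to the limit in the optimality conditions of the two subproblems. First I would observe that Lemma~\ref{lemma:L_difference2} shows the sequence $L(\{x^{t+1}_k\},x^{t+1};y^{t+1})$ is monotonically decreasing, while Lemma~\ref{lemma:L_bounded2} shows it is bounded below; hence it converges to a finite limit. Letting $t\to\infty$ in \eqref{eq:L_descent2}, the telescoped right-hand side must therefore be summable. Since Assumption~A3 guarantees $\rho_k-7L_k>0$ and $\alpha_k>0$ for every $k$, each summand is nonnegative, so both $\lim_{t\to\infty}\|x^{t+1}_k-x^t_k\|=0$ and $\lim_{t\to\infty}\|x^{t+1}-x^t\|=0$. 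Feeding the second limit into Lemma~\ref{lemma:y2} gives $\lim_{t\to\infty}\|y^{t+1}_k-y^t_k\|=0$. The dual update \eqref{eq:y_update_prox} can be rewritten as $x^{t+1}_k-x^{t+1}=\frac{1}{\rho_k}(y^{t+1}_k-y^t_k)$, so $\|x^{t+1}_k-x^{t+1}\|\to 0$, which is exactly claim~1 (asymptotic primal feasibility).

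For claim~2 I would first establish boundedness, so that limit points exist. Since $X$ is compact, $\{x^{t+1}\}$ is bounded; by the feasibility just established $\{x^{t+1}_k\}$ is bounded as well; and from \eqref{eq:y_expression2} we have $y^{t+1}_k=-\nabla g_k(x^{[t+1](k)})$ with $x^{[t+1](k)}\in X$, so the Lipschitz (hence bounded-on-$X$) gradients make $\{y^{t+1}_k\}$ bounded. I would then pick a convergent subsequence with limit $(\{x^*_k\},x^*,y^*)$; the feasibility limit forces $x^*_k=x^*$ for all $k$. The key point is that along this subsequence the stale copy $x^{[t+1](k)}$ also converges to $x^*$: by Assumption~A2 its index lies within $T_k$ iterations of $t+1$, and since the consecutive differences vanish, $\|x^{[t+1](k)}-x^{t+1}\|\le\sum_{i}\|x^{t+1-i}-x^{t-i}\|\to 0$. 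Continuity of $\nabla g_k$ then yields $y^*_k=-\nabla g_k(x^*)$.

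Finally I would pass to the limit in the optimality condition of the $x$-subproblem \eqref{eq:x_update_prox}, namely $0\in\partial h(x^{t+1})+N_X(x^{t+1})-\sum_k y^t_k+\sum_k\rho_k(x^{t+1}-x^t_k)$. Here $\|x^{t+1}-x^t_k\|\le\|x^{t+1}-x^{t+1}_k\|+\|x^{t+1}_k-x^t_k\|\to 0$, the duals satisfy $y^t_k\to y^*_k$, and invoking the outer semicontinuity (closed graph) of $\partial h+N_X$, the limit obeys $0\in\partial h(x^*)+N_X(x^*)-\sum_k y^*_k$. Substituting $y^*_k=-\nabla g_k(x^*)$ gives $0\in\sum_k\nabla g_k(x^*)+\partial h(x^*)+N_X(x^*)$, which is precisely the stationarity condition for \eqref{eq:consensus}; together with $x^*_k=x^*$ and $y^*_k=-\nabla g_k(x^*)$ this is the stationarity (KKT) condition for the reformulation \eqref{eq:consensus:admm}. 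Because the iterates are bounded and \emph{every} accumulation point is stationary, the distance from the sequence to the stationary set tends to zero, giving convergence to the set as claimed.

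The main obstacle is the delayed-gradient term. Unlike in the synchronous case where the linearization point coincides with $x^{t+1}$, here the limit argument hinges entirely on showing $x^{[t+1](k)}\to x^*$ along the subsequence, so that $\nabla g_k(x^{[t+1](k)})\to\nabla g_k(x^*)$ and the dual limit is correctly identified. This is exactly where the bounded-delay Assumption~A2 and the vanishing of successive differences must be combined: without a uniform bound $T_k$ on the staleness one could not guarantee that the outdated evaluation point converges to the same limit as $x^{t+1}$, and the stationarity characterization of the limit would break down.
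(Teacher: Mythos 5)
Your proposal is correct and follows the same route as the paper: combine Lemma~\ref{lemma:L_difference2} with Lemma~\ref{lemma:L_bounded2} to force $\|x^{t+1}_k-x^t_k\|\to 0$ and $\|x^{t+1}-x^t\|\to 0$, feed these into Lemma~\ref{lemma:y2} to get $\|y^{t+1}_k-y^t_k\|\to 0$, and read off primal feasibility from the dual update. The only difference is that the paper defers the stationarity argument to an external reference, whereas you supply it explicitly (boundedness of iterates, convergence of the stale point $x^{[t+1](k)}$ to the same limit via the bounded delay, and outer semicontinuity of $\partial h+N_X$), which is a faithful and complete filling-in of the step the paper omits.
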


\begin{proof}
Combining Lemma \ref{lemma:L_difference2}  -- \ref{lemma:L_bounded2} we must have
\begin{align}
&\lim_{t\to\infty}\|x^{t+1}_k-x^t_k\| \to 0, \; \forall~k,\nonumber\\
&\lim_{t\to\infty}\|x^{t+1}-x^t\| \to 0\nonumber.
\end{align}
Taking limit on both sides of \eqref{eq:y_difference2} and use the above two results, we immediately obtain
\begin{align}
\lim_{t\to\infty}\|y^{t+1}_k-y^t_k\|\to 0,\; \forall ~k.
\end{align}
The first part of the claim is proven.

Once we can show that the primal feasibility gap goes to zero, the proof for stationarity is straightforward. We refer the readers to \cite{hong14nonconvex_admm} for detailed arguments.
\end{proof}

It turns out that for some special cases of $g_k$'s, the requirement on the stepsize can be further relaxed.
\begin{corollary}\label{cor:l_difference}
Suppose Assumption A1 and A3 are true. We have the following:

\begin{enumerate}
\item If $g_k$ is a convex function, then the corresponding $\rho_k$ should satisfy:
\begin{align}\label{eq:rho_convex}
\begin{split}
&{\rho_k}-2\left(\frac{1}{\rho_k}+\frac{L_k}{2\rho_k^2} \right) L^2_k (T_k+1)^2-L_k T^2_k>0\\
&\rho_k \ge  L_k, \; k=1,\cdots, K.
\end{split}
\end{align}


\item If $g_k$ is a concave function, then the corresponding $\rho_k$ should satisfy:
\begin{align}\label{eq:rho_concave}
\begin{split}
&{\rho_k}-2\left(\frac{1}{\rho_k}+\frac{5L_k}{2\rho_k^2} \right) L^2_k (T_k+1)^2-L_k T^2_k>0\\
&\rho_k \ge 5 L_k, \; k=1,\cdots, K.
\end{split}
\end{align}
\end{enumerate}
\end{corollary}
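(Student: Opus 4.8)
The plan is to observe that the constant $7L_k$ (and the attendant threshold $\rho_k>7L_k$) enters the whole analysis at exactly one place: the chain of inequalities \eqref{eq:u_k_l_k} in the proof of Lemma~\ref{lemma:l_difference}, which bounds the gap $u_k(x^t_k;x^{t+1},y^t)-\ell_k(x^t_k;x^{t+1},y^t)$. In the general nonconvex case this gap is controlled by $\tfrac{3}{2}L_k\|x^t_k-x^{t+1}\|^2$, which after the split $\|x^t_k-x^{t+1}\|^2\le 2\|x^t_k-x^{t+1}_k\|^2+2\|x^{t+1}_k-x^{t+1}\|^2$ contributes $3L_k$ to \emph{both} the $\|x^t_k-x^{t+1}_k\|^2$ and the $\|x^{t+1}_k-x^{t+1}\|^2$ coefficients. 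I would re-derive this single estimate using the extra structure of $g_k$ and then let the sharper constant propagate, otherwise unchanged, through Lemmas~\ref{lemma:L_difference2} and~\ref{lemma:L_bounded2} and Theorem~\ref{thm:convergence2}.

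\emph{Convex case.} I would replace \eqref{eq:u_k_l_k} by the one-line estimate
\begin{align}
u_k(x^t_k;x^{t+1},y^t)-\ell_k(x^t_k;x^{t+1},y^t)=g_k(x^{t+1})-g_k(x^t_k)+\langle\nabla g_k(x^{t+1}),x^t_k-x^{t+1}\rangle\le 0,\nonumber
\end{align}
which is exactly the gradient inequality for convex $g_k$ at $x^{t+1}$. This removes the piece entirely, so the $\|x^t_k-x^{t+1}_k\|^2$ coefficient in Lemma~\ref{lemma:l_difference} drops from $\tfrac{\rho_k}{2}-\tfrac{7}{2}L_k$ to $\tfrac{\rho_k-L_k}{2}$ (only the $u_k$-difference survives), and the coefficient multiplying $\|x^{t+1}_k-x^{t+1}\|^2=\tfrac{1}{\rho_k^2}\|y^{t+1}_k-y^t_k\|^2$ drops from $\tfrac{7L_k}{2}$ to $\tfrac{L_k}{2}$ (only the smoothness term \eqref{eq:LU} remains). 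Carrying these through the summation in Lemma~\ref{lemma:L_difference2} yields precisely the $\alpha_k$ and the requirement $\rho_k\ge L_k$ in \eqref{eq:rho_convex}. I would also note that the same gradient inequality sharpens step $(b)$ of Lemma~\ref{lemma:L_bounded2}: using $g_k(x^{t+1})\le g_k(x^{t+1}_k)+\langle\nabla g_k(x^{t+1}),x^{t+1}-x^{t+1}_k\rangle$ the coefficient there improves from $\tfrac{\rho_k-4L_k}{2}$ to $\tfrac{\rho_k-L_k}{2}$, so lower-boundedness of the augmented Lagrangian still holds under $\rho_k\ge L_k$ and the conclusion of Theorem~\ref{thm:convergence2} is unaffected.

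\emph{Concave case.} Here I would keep the smoothness bound $\tfrac{L_k}{2}\|x^{t+1}_k-x^{t+1}\|^2$ and only sharpen the same gap. Applying the concave tangent inequality $g_k(x^{t+1})-g_k(x^t_k)\le\langle\nabla g_k(x^t_k),x^{t+1}-x^t_k\rangle$ in \eqref{eq:u_k_l_k} gives
\begin{align}
u_k(x^t_k;x^{t+1},y^t)-\ell_k(x^t_k;x^{t+1},y^t)\le\langle\nabla g_k(x^t_k)-\nabla g_k(x^{t+1}),x^{t+1}-x^t_k\rangle\le L_k\|x^t_k-x^{t+1}\|^2,\nonumber
\end{align}
by Cauchy--Schwarz and Assumption~A1, improving the general $\tfrac{3}{2}L_k$ to $L_k$. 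After the split this contributes $2L_k$ to each coefficient, giving $\tfrac{\rho_k}{2}-\tfrac{5L_k}{2}$ for $\|x^t_k-x^{t+1}_k\|^2$ and $\tfrac{L_k}{2}+2L_k=\tfrac{5L_k}{2}$ for $\|x^{t+1}_k-x^{t+1}\|^2$; these match \eqref{eq:rho_concave} with threshold $\rho_k\ge 5L_k$. Since step $(b)$ of Lemma~\ref{lemma:L_bounded2} is untouched in this case, its requirement $\rho_k\ge 4L_k$ is subsumed by $\rho_k\ge 5L_k$.

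The argument is essentially bookkeeping, so the only real risk is arithmetic: one must split the improved estimate with the same $\|a+b\|^2\le 2\|a\|^2+2\|b\|^2$ inequality used originally and combine it with the unchanged piece-$1$ contribution $-\tfrac{\rho_k-L_k}{2}$ and piece-$2$ contribution $+\tfrac{L_k}{2}$, so that the two resulting coefficients land on the exact values in \eqref{eq:rho_convex}--\eqref{eq:rho_concave}. The one genuinely subtle point is that lower-boundedness (Lemma~\ref{lemma:L_bounded2}) must be rechecked separately from descent: in the convex case step $(b)$ improves and permits the aggressive threshold $\rho_k\ge L_k$, whereas in the concave case step $(b)$ is unchanged and one must verify that the descent threshold $\rho_k\ge 5L_k$ already dominates the $\rho_k\ge 4L_k$ needed there, which it does.
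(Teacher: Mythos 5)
Your proposal is correct and follows essentially the same route as the paper: both the convex case (replacing the bound \eqref{eq:u_k_l_k} by the gradient inequality $u_k-\ell_k\le 0$ and redoing step (b) of Lemma~\ref{lemma:L_bounded2} via convexity) and the concave case (the tangent inequality giving $L_k\|x^t_k-x^{t+1}\|^2$, split into $2L_k$ per term) reproduce the paper's argument, and the resulting coefficients match \eqref{eq:rho_convex}--\eqref{eq:rho_concave} exactly. Your explicit check that the concave threshold $\rho_k\ge 5L_k$ subsumes the $\rho_k\ge 4L_k$ needed in Lemma~\ref{lemma:L_bounded2} is a point the paper leaves implicit, but it is the same proof.
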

\begin{proof}
The proof is similar to that leading to Theorem \ref{thm:convergence2}. For the convex case, the only differences is in bounding the size of the difference between ${u}_k(x^{t}_{k};x^{t+1},y^t)$ and $\ell_k(x^{t}_{k};x^{t+1},y^t)$ in Lemma \ref{lemma:l_difference}, and in Lemma \ref{lemma:L_bounded2}. If $g_k$  is convex, we have
\begin{align}\label{eq:u_k_l_k2}
&{u}_k(x^{t}_{k};x^{t+1},y^t)-\ell_k(x^{t}_{k};x^{t+1},y^t)\nonumber\\
&=g_k(x^{t+1})-g_k(x^{t}_{k})+\langle \nabla g_k(x^{t+1}), x^{t}_{k}-x^{t+1}\rangle\nonumber\\
&\le 0
\end{align}
where the last inequality comes from the convexity of $g_k$. Similarly, we can replace the series of inequalities in \eqref{eq:L_lower_bound2} by
\begin{align}
&L(\{x^{t+1}_k\}, x^{t+1}; y^{t+1})\nonumber\\
&=h(x^{t+1})+\sum_{k=1}^{K}\bigg(g_k(x^{t+1}_k)+\langle \nabla g_k(x^{[t+1](k)})-\nabla g_k(x^{t+1}), x^{t+1}-x_k^{t+1}\rangle\nonumber\\
&\quad+\langle \nabla g_k(x^{t+1}), x^{t+1}-x_k^{t+1}\rangle+\frac{\rho_k}{2}\|x^{t+1}_k-x^{t+1}\|^2\bigg)\nonumber\\
&\stackrel{\rm (a)}\ge h(x^{t+1})+\sum_{k=1}^{K}\bigg(g_k(x^{t+1})+\frac{\rho_k}{2}\|x^{t+1}_k-x^{t+1}\|^2\nonumber\\
&\quad-L_k\|x^{[t+1](k)}-x^{t+1}\|\|x^{t+1}-x^{t+1}_k\|\bigg)\nonumber\\
&\ge f(x^{t+1})+\sum_{k=1}^{K}\left(\frac{\rho_k-L_k}{2}\|x^{t+1}_k-x^{t+1}\|^2-\frac{L_k}{2}\|x^{[t+1](k)}-x^{t+1}\|^2\right)
\end{align}
where in $(a)$ we have again used the Cauchy-Schwartz inequality and the convexity of $g_k$. Then by simple manipulation we arrive at the claimed result.

If $g_k$  is concave, we can bound the difference between ${u}_k(x^{t}_{k};x^{t+1},y^t)$ and $\ell_k(x^{t}_{k};x^{t+1},y^t)$ in Lemma \ref{lemma:l_difference} by
\begin{align}\label{eq:u_k_l_k2}
&{u}_k(x^{t}_{k};x^{t+1},y^t)-\ell_k(x^{t}_{k};x^{t+1},y^t)\nonumber\\
&=g_k(x^{t+1})-g_k(x^{t}_{k})+\langle \nabla g_k(x^{t+1}), x^{t}_{k}-x^{t+1}\rangle\nonumber\\
&\stackrel{(a)}\le\langle \nabla g_k(x^{t+1})-\nabla g_k(x_k^{t}), x^{t}_{k}-x^{t+1}\rangle\nonumber\\
&\le 2L_k\left(\|x^{t}_{k}-x^{t+1}_k\|^2+\|x^{t+1}_k-x^{t+1}\|^2\right),
\end{align}
where $(a)$ comes from the concavity of $g_k$.
\end{proof}

We have a few remarks in order.

\begin{remark}
{\it(On the Bounded Delays)}
Our convergence results critically dependent on the choice of the stepsizes $\{\rho_k\}$, which in turn is a function of the bounds $\{T_k\}$. Clearly all $T_k$'s must be finite, therefore the scheme proposed here is reminiscent to the family of {\it partially asynchronous} algorithm discussed in \cite[Chapter 7]{Bertsekas_Book_Distr}. Clearly those bounds on $\rho_k$'s are developed for the {\it worst case} delay scenarios. If we model different delays $\{t-t(k)\}$ as random variables following certain probability distributions with {\it finite} supports, we can slightly modify the analysis so that the final bounds on $\rho_k$'s are dependent on the statistical properties of the random variables. Such modification is minor so we do not intend to go over it in this paper. The more interesting case would be when the delays $\{t-t(k)\}$ follow distributions with finite means and variances but {\it unbounded} supports. However our current approach cannot be directly used.
\end{remark}

\begin{remark}
{\it(On the Relationship with \cite{hong14nonconvex_admm})}
The analysis presented above follows the general {recipe} first alluded in \cite{Ames13LDA} and later generalized in \cite{hong14nonconvex_admm}, for dealing with nonconvex ADMM-type algorithms. The same three-step approach is used here: {\it 1)} Bounding the size of the successive difference of the dual variable; {\it 2)} Bounding the successive difference of the augmented Lagrangian; {\it 3)} Bounding the sequence of the augmented Lagrangian. However, several important improvements have been made to both the algorithm and the analysis in order to better incorporate asynchrony. For example, compared with the flexible Proximal ADMM algorithm in \cite{hong14nonconvex_admm}, we have increased the stepsize for updating $x_k$ from $\frac{1}{\rho_k+L_k}$ to $\frac{1}{\rho_k}$. This change significantly simplifies the analysis and leads to a better bound for $\rho_k$.  Second, in the flexible Proximal ADMM, a given tuple $(x_k, y_k)$ is only updated when the new gradient is available, while here $(x_k, y_k)$ is updated at every iteration regardless of the availability of new gradients. This also leads to better bound for $\rho_k$ and faster algorithm. Third, different analysis techniques have been used throughout to take into consideration the changes in the algorithm as well as the presence of staled gradients.
\end{remark}

\begin{remark}
{\it (On the Necessity of Lemma \ref{lemma:L_bounded2})}
As a technical remark, we emphasize that lower-bounding the sequence of the augmented Lagrangian, as we have done in Lemma \ref{lemma:L_bounded2}, is a key step in the entire analysis. The reason is that the compactness of the set $X$ only guarantees that the sequence of $\{x^t\}$ is bounded, but not the sequences $\{x^t_k, y^t_k\}$ (note that $x^t_k$ is generated by solving an {\it unconstrained} problem). Without the boundedness of $\{x^t_k, y^t_k\}$, the augmented Lagrangian $L(\{x^t_k\}, x^t; y^t)$ can go to $-\infty$, therefore one cannot claim that $\|x^{t+1}-x^t\|\to 0$ and $\|x^t_k-x^{t+1}_k\|\to 0$.
\end{remark}

\section{Numerical Results}\label{sec:numerical}
In this section we conduct numerical experiments to validate the performance of the proposed algorithm.

\subsection{The Setup}
We consider the following nonconvex problem:
\begin{align}
\begin{split}
\min&\quad f(x):=-\frac{1}{2}\sum_{k=1}^{K}x^T B^T_k B_k x+\lambda\|x\|_1\\
\st &\quad \|x\|^2_2\le 1 \label{eq:example},
\end{split}
\end{align}
where $B_k\in \mathbb{R}^{M_k
\times N}$ is a data matrix and $\lambda\ge 0$ is some constant. When $K=1$, this problem is related to the $\ell_1$ penalized version of the sparse principal component analysis (PCA) problem; see \cite{richtarik12PCA, hong14icassp,jiang14complexity}. This problem is also related to the penalized version of Fisher's Linear Discriminant Analysis (LDA); see \cite{Witten11,Ames13LDA}. However, the algorithms discussed in the existing literature such as those in \cite{richtarik12PCA, hong14icassp,jiang14complexity, Witten11,Ames13LDA} cannot deal with the scenario where the data matrices $\{B_k\}_k$ are physically located in distributed nodes.

To formulate above sparse PCA problem in the form of \eqref{eq:consensus:admm}, we introduce a set of new variable $\{x_k\}$:
\begin{align}
\begin{split}
\min&\quad -\frac{1}{2}\sum_{k=1}^{K}x_k^T B^T_k B_k x_k+\lambda\|x\|_1\\
\st &\quad \|x\|^2\le 1, \; x_k = x,\; \forall~k. \label{eq:example:consensus}
\end{split}
\end{align}
It is straightforward to see that when applying ADMM or the Async-PADMM, each subproblem can be solved in closed form. It is also worth noting that each smooth term in the objective of \eqref{eq:example:consensus}, $-\frac{1}{2}x_k^T B^T_k B_k x_k $, is a concave function, so the refined the stepsize rule \eqref{eq:rho_concave} can be used for Async-PADMM.

In our experiment, we compare the Async-PADMM with the following two algorithms
\begin{enumerate}
\item {\bf Synchronous ADMM Algorithm}: This is the vanilla ADMM algorithm discussed in \cite[Section 2.2]{hong14nonconvex_admm}. The algorithm can handle nonconvex problems, but its protocol is synchronous. Therefore when the nodes have different computational time, the master node has to wait for all the distributed nodes to complete one iteration of computation before proceeding to the next step. The downside of this approach is that fast nodes have to wait for the slow nodes. The choice of the stepsize $\rho_k$ follows the condition given in  \cite[Assumption A]{hong14nonconvex_admm}.
\item {\bf Synchronous PADMM Algorithm}: This is the period-1 proximal ADMM algorithm discussed in \cite[Section 2.3]{hong14nonconvex_admm}. Again the algorithm is synchronous. The choice of the stepsize $\rho_k$ follows the condition given in  \cite[Assumption B]{hong14nonconvex_admm}.
\end{enumerate}
It is worth noting that by simply waiting for the slowest nodes at each iteration, the ADMM and PADMM are capable of handling the asynchrony caused by the computational delay, albeit in a rather inefficient way. However neither of them can deal with the asynchrony caused by imperfect communication link (i.e., loss of messages, out-of-sequence messages, etc). Therefore for fair comparison, in our experiments we only consider scenarios where the communication links are perfect. That is, we require all messages sent by the nodes are perfectly received,  in sequence.

In our experiment, the data matrices $\{B_k\}_{k}$ are generated as follows. Each element $b_k(i,j), i=1, \cdots M_k, j=1, \cdots, N$ in $B_k\in\mathbb{R}^{M_k\times N}$ is generated independently, according to the following Gaussian Mixture model: $$b_k(i,j)\sim p_k \mathcal{N}(a_k(i,j), c_k(i,j))+(1-p_k)\mathcal{N}(0, 0),$$
where $a_k(i,j)$ and $c_k(i,j)$ follow uniform distribution: $a_k(i,j)\sim \mbox{Uniform}(0,1)$ and $c_k(i,j)\sim \mbox{Uniform}(0,1)$.  In words, there is a probability $p_k>0$ such that $b_k(i,j)$ is nonzero and follows a Gaussian distribution with random mean and variance. The asynchrony in the system is simulated as follows. For each node $k$ we assign a distinct $T_k$ which characterizes the maximum delay for that node. Each time node $k$ starts to perform computation (i.e., Step S2 in Algorithm 1(b)), the computational delay is randomly drawn from the distribution $\mbox{Uniform}(0,T_k)$.

To measure the progress of different algorithms, we need the following definitions. For a given iterate $x^{t}$, it is known that the size of the proximal gradient, expressed below, can be used to measure the optimality:
\begin{align}\label{eq:proximal_gradient}
\begin{split}
\tilde{\nabla} f(x^{t})& = x^{t} - \prox_{h+\iota(X)}\left(x^t - (f(x^{t})-h(x^{t}))\right)\\
&= x^{t} - \prox_{h+\iota(X)}\left(x^{t} + (x^t)^T\sum_{k=1}^{K} B^T_k B_k x^{t}\right)
\end{split}
\end{align}
where the set $X$ is given by $X:=\{x\mid \|x\|^2\le 1\}$.
It is easy to show that $\|\tilde{\nabla} f(x^{t})\| = 0$ implies that $x^{t}$ is a stationary solution for problem \eqref{eq:example} (see, for example, \cite{jiang14complexity,meisam14nips}). However using the proximal gradient alone is not enough here, as we also need to make sure that the primal feasibility gap $\|x^{t}-x^{t}_k\|$ goes to zero.
Therefore in this work we combine the above two criteria and use the following quantity to measure the progress of all three algorithms
\begin{align}
e\left(x^{t}, \{x^{t}_k\}\right) := \max_{k}\frac{\|x^{t}_k-x^{t}\|}{\|x^{t}\|} + \|\tilde{\nabla} f(x^{t})\|. \label{eq:measure}
\end{align}
All the algorithms we tested will be terminated when $e\left(x^{t}, \{x^{t}_k\}\right)$ reaches below $10^{-3}$.

\subsection{The Results}
We first graphically illustrate the convergence behavior of different algorithms. We set $N=500$, $K=10$, $\lambda=0$, $T_k = 5$, $M_k = 100$, $p_k=0.1$ for all $k$. In Fig. \ref{fig:augmented} -- \ref{fig:Stopping}, the progress of the algorithm is shown by the sequences of the augmented Lagrangian $L(x^{t}, \{x^t_k\}; y^t)$ as well as the optimality measure $e\left(x^{t}, \{x^{t}_k\}\right)$. First we observe that as predicted in Lemma \ref{lemma:L_difference2} and Lemma \ref{lemma:L_bounded2}, the augmented Lagrangian generated by the Async-PADMM algorithm is a decreasing and lower bounded sequence. Second we see that the augmented Lagrangian generated by ADMM (or the PADMM) resembles a stair function. The reason is that between two successive updates, the master node has to wait for a few iterations for the slowest nodes to finish the computation. Nothing is done during such waiting period, leading to constant augmented Lagrangian. Third, we see that the ADMM seems to be able to reduce the augmented Lagrangian quickly, but in terms of the overall optimality measure it takes longer to converge compared with Async-PADMM. 

\begin{figure}[htb] {\includegraphics[width=
1\linewidth]{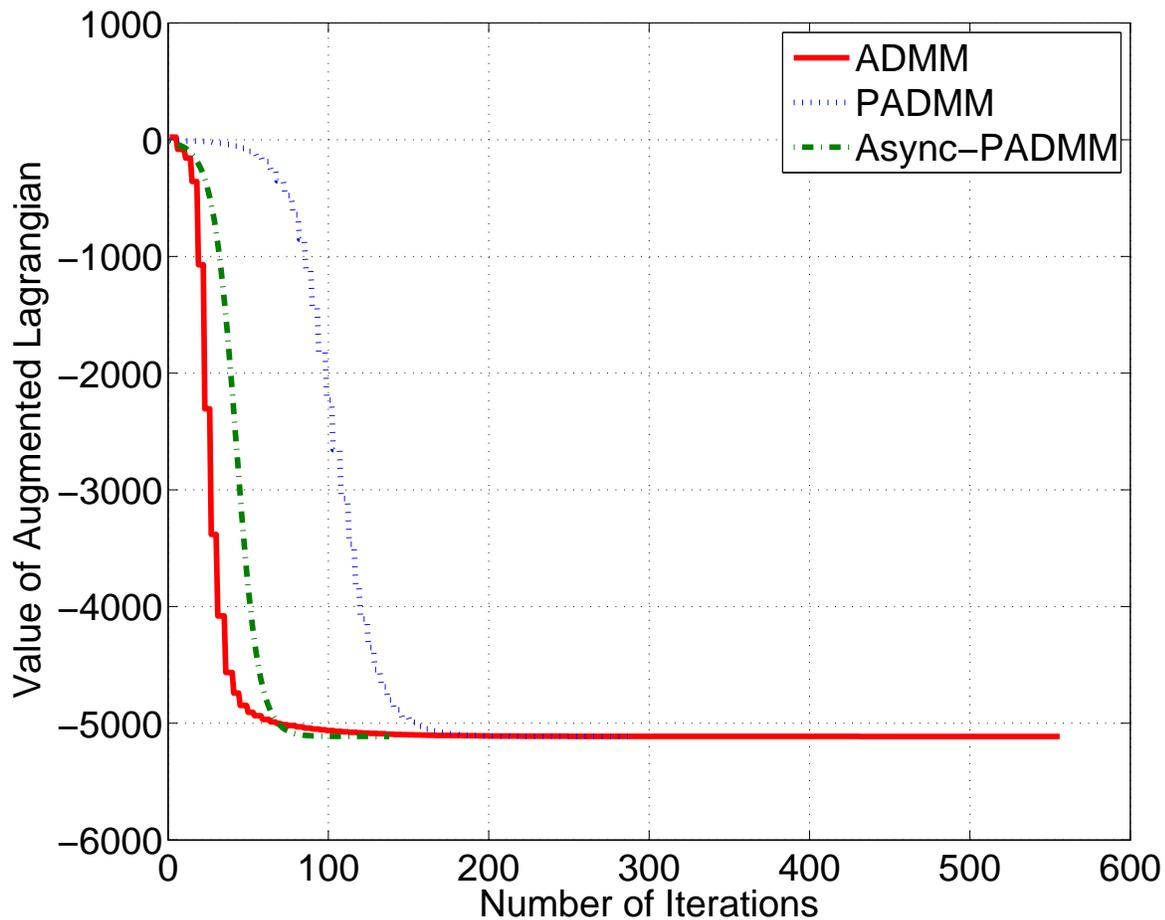}}\vspace{-0.1cm}\caption{Illustration of
the reduction of the augmented Lagrangian for one run of the algorithms. $K=10$, $N=500$, $T_k = 5, \; \forall~k$. For Async-PADMM, $\rho_k$'s are chosen according to \eqref{eq:rho_concave}.}\label{fig:augmented}
\end{figure}

\begin{figure}[htb] {\includegraphics[width=
1\linewidth]{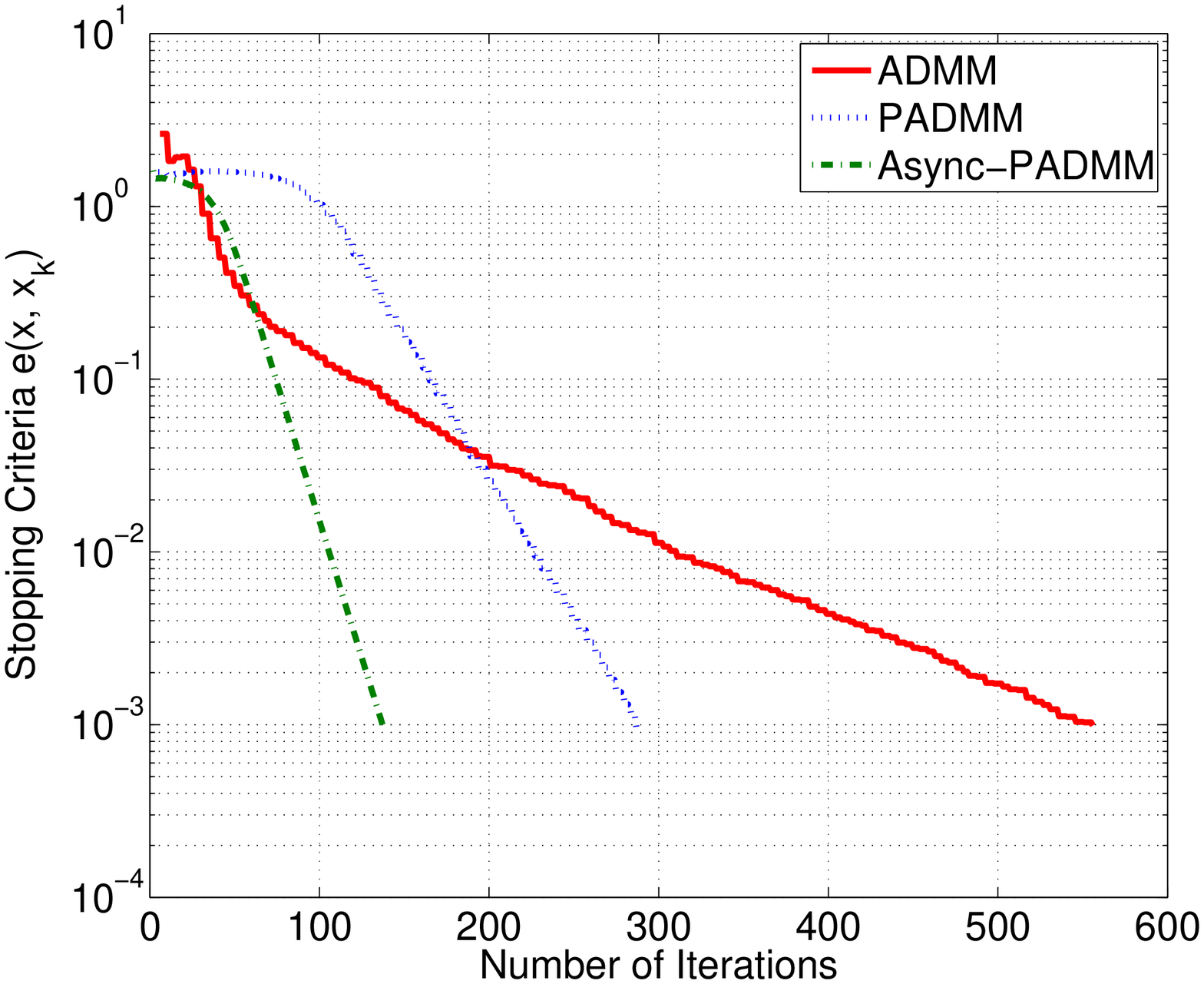}}\vspace{-0.1cm}\caption{Illustration of
the reduction of the stopping criteria for one run of the algorithms. $K=10$, $N=500$, $T_k = 5, \; \forall~k$. For Async-PADMM, $\rho_k$'s are chosen according to \eqref{eq:rho_concave}.}\label{fig:Stopping}
\end{figure}

Next we show the averaged convergence behavior of different algorithms, under various different scenarios. Note that each number in the following table is the average of $50$ independent runs of the respective algorithm.

In Table \ref{table:user}, we compare the behavior of different algorithms with varying number of distributed nodes. We observe that for all three algorithms, the iteration required for convergence is increasing with the number of nodes $K$. It is also clear that the proposed Async-PADMM performs the best (we use underlines to highlight the best result for each scenario).
\begin{table}
\caption{\small {Convergence Behavior When Increasing $K$. $N=500$, $\lambda=0$, $T_k = 5$, $M_k = 100$, $p_k=0.1$}}\vspace*{-0.1cm}
\begin{center}
{\small
\begin{tabular}{|c |c |c |c| c|c| }
\hline
{\bf Method }& {\bf $K=10$}& {\bf $K=20$} & {\bf $K=30$} & {\bf $K=40$} & {\bf $K=50$}\\
\hline
\hline
ADMM & 525& 532& 560& 564& 587\\
\hline
P-ADMM& 362& 456& 536& 636& 625\\
\hline
Async-PADMM& \underline{190}& \underline{220}& \underline{251}& \underline{286}& \underline{285}\\
\hline
\end{tabular} } \label{table:user}
\end{center}
\vspace*{-0.1cm}
\end{table}

In Table \ref{table:delay}, we compare different algorithms under varying degrees of asynchrony. More specifically, in the first four scenarios, we change $T_k$ from $0$ to $9$, for all $k$. In the last two scenarios, we set all the $T_k$'s to be zero, except for Node 10, whose $T_k$ is either $5$ or $10$. This is to test how the algorithms react to a system with a single slow node. We observe that all three algorithms perform well when there is no computational delay (i.e., when $T_k=0$). However once delay starts to increase, ADMM and PADMM become slow, and the transition is quite abrupt (for example ADMM triples its convergence time when $T_k$'s change from $0$ to $3$). This is reasonable as ADMM and PADMM are not designed to deal with asynchrony.
\begin{table*}
\caption{\small {Convergence Behavior For Different Levels of Asynchrony. $N=500$, $K=10$, $\lambda=0$, $M_k = 100$, $p_k=0.1$}}\vspace*{-0.1cm}
\begin{center}
{\small
\begin{tabular}{|c |c |c |c| c|c|c | }
\hline
{\bf Method }& {\bf $T_k=0$}& {\bf $T_k=3$} & {\bf $T_k=6$} & {\bf $T_k=9$} & {\bf $\{T_k=0\}_{k=1}^{9}$} & {\bf $\{T_k=0\}_{k=1}^{9}$} \\
& & & & & {\bf $T_{10}=5$} & {\bf $T_{10}=10$}\\
\hline
\hline
ADMM & 95& 461& 533& 582& 528 & 914 \\
\hline
P-ADMM& 72& 266& 415& 556& 280 & 456 \\
\hline
Async-PADMM& \underline{71}& \underline{113}& \underline{248}& \underline{453}& \underline{98} & \underline{183}\\
\hline
\end{tabular} } \label{table:delay}
\end{center}
\vspace*{-0.1cm}
\end{table*}
In the last set of experiments, we increase the dimensions of the unknown variables and the value of penalization parameter $\lambda$. The results are in Tables \ref{table:dimension} -- \ref{table:lambda}. Again we see that the proposed algorithm works well in both cases.
\begin{table*}
\caption{\small {Convergence Behavior When Increasing $N$. $K=10$, $\lambda=0$, $T_k = 5$, $M_k = 100$, $p_k=0.1$}}\vspace*{-0.1cm}
\begin{center}
{\small
\begin{tabular}{|c |c |c |c| c|c| }
\hline
{\bf Method }& {\bf $N=200$}& {\bf $N=400$} & {\bf $N=600$} & {\bf $N=800$} & {\bf $N=1000 $}\\
\hline
\hline
ADMM & 524& 508& 516& 549& 575\\
\hline
P-ADMM& 360& 361& 357& 392& 379\\
\hline
Async-PADMM& \underline{187}& \underline{180}& \underline{184}& \underline{196}& \underline{188}\\
\hline
\end{tabular} } \label{table:dimension}
\end{center}
\vspace*{-0.1cm}
\end{table*}

\begin{table*}
\caption{\small {Convergence Behavior When Increasing $\lambda$. $K=10$, $N = 500$, $T_k = 5$, $M_k = 100$, $p_k=0.1$}}\vspace*{-0.1cm}
\begin{center}
{\small
\begin{tabular}{|c |c |c |c| c|c| }
\hline
{\bf Method }& {\bf $\lambda=20$}& {\bf $\lambda=40$} & {\bf $\lambda=60$} & {\bf $\lambda=80$} & {\bf $\lambda=100 $}\\
\hline
\hline
ADMM & 531& 570& 544& 616& 916\\
\hline
P-ADMM& 362& 369& 461& 488& 555\\
\hline
Async-PADMM& \underline{194}& \underline{209}& \underline{214}& \underline{274}& \underline{324}\\
\hline
\end{tabular} } \label{table:lambda}
\end{center}
\vspace*{-0.1cm}
\end{table*}

\section{Conclusion}\label{sec:conclusion}

In this paper, we propose an ADMM-based algorithm that is capable of solving the nonsmooth and nonconvex problem \eqref{eq:consensus} in a distributed, asynchronous and incremental manner. We show that as long as the stepsize of the primal and dual updates are chosen sufficiently large, the algorithm converges to the set of stationary solutions of the problem. Numerically we show that the proposed algorithm can efficiently deal with the asynchrony arises from distributedly solving certain sparse PCA problem. In the future, we are interested in analyzing the iteration complexity of the algorithm proposed in this paper. That is, we want to bound the maximum number of iterations needed to reach an $\epsilon$-stationary solution for problem \eqref{eq:consensus}. To the best of our knowledge, such iteration complexity analysis for nonconvex ADMM-type algorithm is not available yet. We are also interested in extending the analysis to asynchronous ADMM algorithm {\it without} using the proximal step. Our current analysis is critically dependent on the availability of the gradient information for each component function, therefore cannot be directly applied to the aforementioned case.

\section{Acknowledgement}
The author wish thank Zhi-Quan Luo from University of Minnesota, and
Tsung-Hui Chang from National Taiwan University of Science and Technology, and Xiangfeng Wang from East China Normal University, for helpful discussions.

\bibliographystyle{IEEEbib}

\bibliography{ref,biblio}

\end{document}